\newcommand{\blue}[1] {\textcolor{blue}{#1}}
\definecolor{shadecolor}{rgb}{.9, .9, .9}
\colorlet{exframecolor}{orange}
    \newenvironment{frshaded*}{%
    \MakeFramed {\advance\hsize-\width \FrameRestore}}%
    {\endMakeFramed}
    \newcounter{examplecounter}
\newenvironment{exam}{
 \begin{frshaded*}
    \refstepcounter{examplecounter}%
    \noindent
  \textbf{Example \arabic{examplecounter}}%
  \quad
}{%
\end{frshaded*}
}
\newenvironment{frshaded2*}{%
    \MakeFramed {\advance\hsize-\width \FrameRestore}}%
    {\endMakeFramed}
\newenvironment{result}{
 \begin{frshaded2*}
}{%
\end{frshaded2*}
}
\newenvironment{frshaded3*}{%
    \MakeFramed {\advance\hsize-\width \FrameRestore}}%
    {\endMakeFramed}
\newcommand{\Gen}{\textsc{Gen}}
\newcommand{\RevGen}{\textsc{RevGen}}
\newcommand{\LIST}[1]{\mathcal{G}_{#1}}
\newcommand{\REVLIST}[1]{\mathcal{R}_{#1}}
\newcommand{\spt}[1]{\ensuremath{t_{#1}}}
\algrenewcommand\algorithmicrequire{\textbf{Precondition:}}
\algrenewcommand\algorithmicensure{\textbf{Postcondition:}}
\definecolor{verbgray}{gray}{0.9}
\title{Pivot Gray Codes for the Spanning Trees of a Graph ft.~the Fan}
\titlerunning{A Pivot Gray Code for Spanning Trees}
\author{Ben Cameron}{The King's University, Canada}{ben.cameron@kingsu.ca}{}{}
\author{Aaron Grubb}{University of Guelph, Canada}{agrubb@uoguelph.ca}{}{}
\author{Joe Sawada}{University of Guelph, Canada}{jsawada@uoguelph.ca}{}{}
\authorrunning{B. Cameron and A. Grubb and J. Sawada} 
\authorrunning{~} 
\keywords{pivot Gray code, spanning tree, greedy algorithm, fan graph, combinatorial generation.}
\begin{document}







%
%

\maketitle   
\vspace{0.0in}

\begin{abstract}

We consider the problem of listing all spanning trees of a graph $G$ such that successive trees differ by pivoting a single edge around a vertex.  Such a listing is called a ``pivot Gray code'', and it has more stringent conditions than known ``revolving-door'' Gray codes for spanning trees. Most revolving-door algorithms employ a standard edge-deletion/edge-contraction recursive approach which we demonstrate presents natural challenges when requiring the ``pivot'' property.  Our main result is the discovery of a greedy strategy to list the spanning trees of the fan graph in a pivot Gray code order. It is the first greedy algorithm for exhaustively generating spanning trees using such a minimal change operation. The resulting listing is then studied to find a recursive algorithm that produces the same listing in $O(1)$-amortized time using $O(n)$ space.  Additionally, we present $O(n)$-time algorithms for ranking and unranking the spanning trees for our listing; an improvement over the generic $O(n^3)$-time algorithm for ranking and unranking spanning trees of an arbitrary graph. Finally, we discuss how our listing can be applied to find a pivot Gray code for the wheel graph.

\end{abstract}


\vspace{0.0in}

\section{Introduction} \label{sec:intro}

Applications of efficiently listing all spanning trees of general graphs are ubiquitous in computer science and also appear in many other scientific disciplines \cite{chakraborty}. In fact, one of the earliest known works on listing all spanning trees of a graph is due to the German physicist Wilhelm Feussner in 1902 who was motivated by an application to electrical networks \cite{1902}. In the 120 years since Feussner's work, many new algorithms have been developed, such as those in the following citations \cite{berger,Char,cummins,gabow,hakimi,holzmann,kamae,kapoor,kishi,matsui,mayeda,minty,Shioura1995,uno,smith1997generating,winter}. 

For any application, it is desirable for spanning tree listing algorithms to have the asymptotically best possible running time, that is, $O(1)$-amortized running time. The algorithms due to Kapoor and Ramesh \cite{kapoor}, Matsui \cite{matsui}, Smith \cite{smith1997generating}, Shioura and Tamura \cite{Shioura1995} and Shioura et al. \cite{uno}  all run in $O(1)$-amortized time. Another desirable property of such listings is to have the \emph{revolving-door} property, where successive spanning trees differ by the addition of one edge and the removal of another. Such listings where successive objects in a listing differ by a constant number of simple operations are more 
generally known as \textit{Gray codes}. 
The algorithms due to Smith \cite{smith1997generating}, Kamae \cite{kamae}, Kishi and Kajitani \cite{kishi}, Holzmann and Harary \cite{holzmann} and Cummins \cite{cummins} all produce Gray code listings of spanning trees for an arbitrary graph.  Of all of these algorithms, Smith's is the only one that produces a Gray code listing in $O(1)$-amortized time.  

\begin{exam}
Consider the fan graph on five vertices illustrated in Figure~\ref{fig:F5}, where the seven edges are labeled  \footnotesize
$$ e_1=v_1v_2, \ e_2= v_1v_{\infty}, \  e_3= v_2v_3, \ e_4= v_2v_{\infty}, \  e_5= v_3v_4, \ e_6=v_3v_{\infty}, \ e_7=v_4v_{\infty}.$$
\normalsize
The following is a revolving-door Gray code for the 21 spanning trees of this graph. The initial spanning tree has edges $\{e_1,e_2,e_5,e_6\}$ and each step of the listing below provides the edge that is removed from the current tree followed by the new edge that is added to obtain the next tree in the listing:
\medskip

\begin{center}
\begin{tabular}{l@{\hspace{4em}}l@{\hspace{4em}}l}
     1. $-e_5+e_7$  &     \ 8. $-e_5+e_7$      &  \textcolor{red}{ 15. $\bm{-e_7+e_3}$} \\
     2. $-e_1+e_3$  &    \  9. $-e_4+e_3$   &  16. $-e_2+e_4$ \\
     3. $-e_3+e_4$  &   10. $-e_6+e_5$  &  17. $-e_1+e_2$ \\
     4. $-e_7+e_5$  &   \textcolor{red}{11.  $\bm{-e_5+e_2}$}     &  18. $-e_4+e_7$ \\
     5. $-e_4+e_3$  &  12.  $-e_2+e_4$  &  19. $-e_3+e_4$ \\
     6. $-e_2+e_1$  &   13. $-e_3+e_5$  &  20. $-e_5+e_3$.   \\
     7. $-e_3+e_4$  &   14. $-e_4+e_2$  & 
\end{tabular}
\end{center}

\smallskip

\noindent
This listing was generated from  Knuth's implementation of Smith's~\cite{smith1997generating} algorithm provided at \url{http://combos.org/span}. The steps in red are highlighted to show where the edge moves do not pivot around a vertex.
\end{exam}

A stronger notion of a Gray code for spanning trees is where the revolving-door makes strictly local changes.  More specifically, we would like the  edges being removed and added at each step to share a common endpoint.  We call a listing with this property a \textit{pivot Gray code} (also known as a \emph{strong revolving-door Gray code}~\cite{knuth}). The aforementioned spanning tree Gray codes are not pivot Gray codes.
In particular, the Gray code given by Smith's algorithm~\cite{smith1997generating} is not a pivot Gray code as illustrated in our previous example: the highlighted edge moves 11 and 15 do not have the ``pivot'' property.  
This leads to our first research question.

\begin{quote} \small
{\bf Research Question \#1} \ Given a graph $G$ (perhaps from a specific class), does there exist a pivot Gray code listing of all spanning trees of $G$? Furthermore, can the listing be generated in polynomial time per tree using polynomial space?
\end{quote}
A short discussion as to why previous methods do not lead directly to pivot Gray codes is presented in Section~\ref{sec:contract}.

A related question that arises for any listing is how to \emph{rank}, that is, find the position of the object in the listing,  and \emph{unrank}, that is, return the object at a specific rank. For spanning trees, an $O(n^3)$-time algorithm for ranking and unranking a spanning tree of a specific listing for an arbitrary graph is known~\cite{colbourn1989unranking}. 

\begin{quote} \small
{\bf Research Question \#2} \ Given a graph $G$ (perhaps from a specific class), does there exist a (pivot Gray code) listing of all spanning trees of $G$ that can be ranked and unranked in $O(n^2)$ time or better? 
\end{quote}

An algorithmic technique recently found to have success in the discovery of Gray codes is the greedy approach.  An algorithm is said to be \emph{greedy} if it can prioritize allowable actions according to some criteria, and then choose the highest priority action that results in a unique object to obtain the next object in the listing.
When applying a greedy algorithm, there is no backtracking; once none of the valid actions lead to a new object in the set under consideration, the algorithm halts, even if the listing is not exhaustive.  The work by Williams~\cite{williams2013greedy} notes that some very well-known combinatorial listings can be constructed greedily, including the binary reflected Gray code (BRGC) for binary strings, the plain change order for permutations,  and the lexicographically smallest de Bruijn sequence. 
Recently, a very powerful greedy algorithm on permutations (known as Algorithm J, where J stands for ``jump'') generalizes many known combinatorial Gray code listings including many related to permutation patterns, rectangulations, and elimination trees~\cite{MUTZE2020,MUTZEHoang2019,MUTZERectangulations2021}. However, no greedy algorithm was previously known to list the spanning trees of an arbitrary graph.

\begin{quote}  \small
{\bf Research Question \#3}  \ Given a graph $G$ (perhaps from a specific class), does there exist a greedy strategy to list all spanning trees of $G$?  Moreover, is the resulting listing a pivot Gray code? 
\end{quote}

\noindent
In most cases, a greedy algorithm requires exponential space to recall which objects have already been visited in a listing. Thus, answering this third question would satisfy only the first part of {\bf Research Question \#1}. However, in many cases, an underlying pattern can be found in a greedy listing which can result in space efficient algorithms~\cite{MUTZE2020,williams2013greedy}.  In recent communication with Arturo Merino, a greedy algorithm for listing the spanning trees of an arbitrary graph $G$ has been discovered by considering each tree's characteristic vector and transposing elements to change the
shortest possible prefix; however, it does not yield a pivot Gray code. 

To address these three research questions, we applied a variety of greedy approaches to structured classes of graphs including the fan, wheel, $n$-cube, and the compete graph.  From this study, we were able to affirmatively answer each of the research questions for the fan graph.  It remains an open question to find similar results for other classes of graphs.

\subsection{New results} \label{sec:results}

The \textit{fan graph} on $n$ vertices, denoted $F_n$, is obtained by joining a single vertex (which we label $v_\infty$) to the path on $n-1$ vertices (labeled $v_2, ... , v_n$) -- see Figure~\ref{fig:F5}. Note that we label 
\begin{wrapfigure}[7]{r}{0.35\textwidth}
\begin{center}
  \vspace*{-0.5cm}
  \hspace{-0.6cm}
  \includegraphics[scale=0.35, trim=0 0 0 0cm, clip]{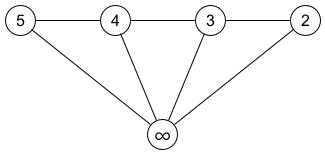}
  \caption{The fan $F_5$}
  \vspace{-0.2cm}
  \label{fig:F5}
\end{center}
\end{wrapfigure}
the smallest vertex $v_2$ so that the largest non-infinity labeled vertex equals the total number of vertices. 
We discover a greedy strategy to generate the spanning trees of $F_n$ in a pivot Gray code order. We describe this greedy strategy in Section~\ref{sec:greedy}. The resulting listing is studied to find an $O(1)$-amortized time recursive algorithm that produces the same listing using only $O(n)$ space, which is presented in Section~\ref{sec:recursion}. We also show how to rank and unrank a spanning tree of the greedy listing in $O(n)$ time in Section~\ref{sec:recursion}, which is a significant improvement over the general $O(n^3)$-time ranking and unranking that is already known.  The proofs for our main technical results are in Section~\ref{sec:proofs}. We conclude with a summary in Section~\ref{sec:summary}, along with a discussion as to how our pivot Gray code for the fan can be extended to the wheel.  

A complete C implementation of our algorithms is available in the Appendix.  A preliminary version of this paper appeared in COCOON 2021~\cite{confpaper}.


\subsection{Oriented spanning trees}

Although we only consider undirected graphs in this paper, we point out a related open problem for directed graphs.

Given a directed graph $D$ and a fixed root vertex $r$, an \emph{oriented spanning tree} or \emph{spanning arborescence} is an oriented subtree $T$ of $D$ with $n-1$ arcs such that there is a unique path from $r$ to every other vertex in $D$; all the arcs are directed away from $r$ in $T$.  The problem of finding a revolving-door Gray code for oriented spanning trees remains an open problem with a difficulty rating of 46/50 as given by Knuth in problem 102 on page 481 of~\cite{knuth}.  Knuth also notes on page 804 that a solution to this problem for a fixed root $r$ implies that a strong revolving-door (pivot) Gray code exists for the spanning trees of an undirected graph\footnote{The author's thank Torsten M\"{u}tze for pointing out this comment.}. The mapping here is natural: given an undirected graph $G$, replace all edges $(u,v)$ with two directed edges, one from $u$ to $v$ and one from $v$ to $u$. Algorithms to list all oriented spanning trees with a given root are known~\cite{gabow,Kapoor00analgorithm}; however, neither have the revolving-door property.


\subsection{Edge contraction and deletion} \label{sec:contract}
A technique applied in the construction of several ``revolving-door'' Gray codes~\cite{minty,smith1997generating,winter} is to recursively partition the spanning trees of a graph $G$ into those containing a specific edge $e$ by applying edge contraction, and those that do not contain the edge $e$ by deleting $e$.  However, when applying this strategy to construct a pivot Gray code, there are challenges when it comes to \emph{uncontracting an edge}.  Specifically, even if we have a pivot Gray code for the spanning trees in $G/ e$ ($G$ with the edge $e$ contracted), once we uncontract $e$, it does not necessarily result in a pivot Gray code for the original graph $G$.  See, for example Figure~\ref{fig:contract}.   

\begin{figure} [h]
\begin{center}
  \includegraphics[scale=0.70]{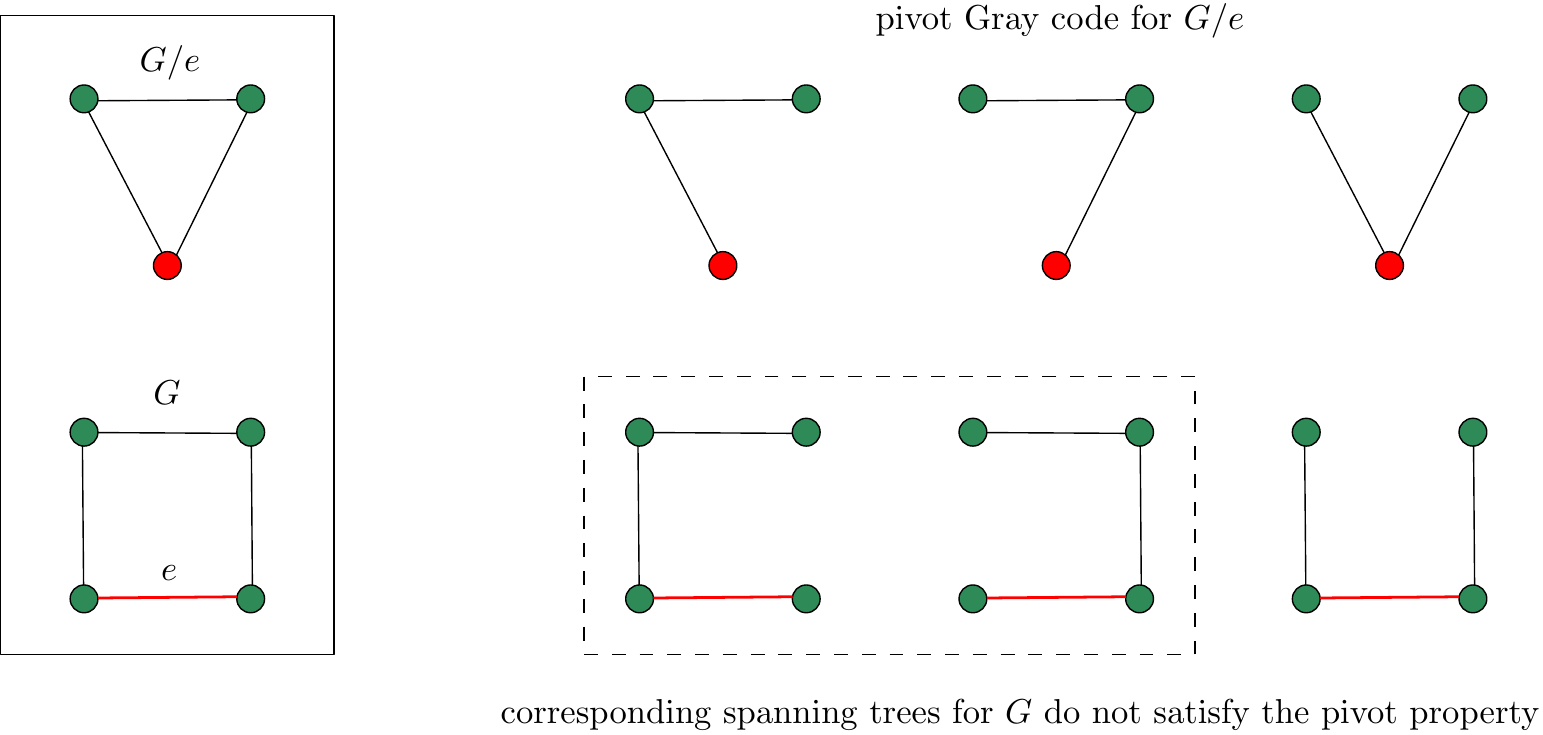}
  \caption{Illustrating how a pivot Gray code in graph $G/ e$ does not necessarily correspond to a pivot Gray code for the spanning trees of $G$ that include $e$.}
  \label{fig:contract}
\end{center}
\end{figure}


\section{A greedy approach}  \label{sec:greedy}

Every unicyclic graph has a cyclic pivot Gray code: start with any initial spanning tree then pivot the edges around the cycle one at time.  Thus, when considering greedy approaches for discovering a pivot Gray code we focused on the next obvious graph classes:  the complete graph $K_n$, the fan $F_n$, the wheel $W_n$, and the $n$-cube.   
There are two important issues when considering a greedy approach to list spanning trees: (1) the labels on the vertices (or edges) and (2) the starting tree. For each of our approaches, we prioritized our operations by first considering which vertex $u$ to pivot on, followed by an ordering of the endpoints considered in the addition/removal.  We call the vertex $u$ the \emph{pivot}.  

Our initial attempts focused only on pivots that were leaves.   As a specific example, we ordered the leaves (pivots) from smallest to largest.  Since each leaf $u$ is attached to a unique vertex $v$ in the current spanning tree, we then considered the neighbours of $u$ in increasing order of label.
We restricted the labeling of the vertices to the most natural ones, such as the one presented in Section~\ref{sec:results} for the fan.  For each strategy we tried all possible starting trees. Unfortunately, none of our attempts lead to exhaustive listings for  $K_n$, $F_n$, $W_n$, or the $n$-cube.

By allowing the pivot to be any arbitrary vertex, we ultimately discovered several exhaustive listing for the spanning trees of $F_n$; however, rather interestingly, we found no such listing for any other class.  
The listings we found for the fan were generated up to $n=12$. Starting from every starting tree for $n=12$ took about 8 hours on a single processor.
One listing stood out as having an easily defined starting tree as well as a nice pattern which we could study to construct the listing more efficiently.  It applied the labeling of the vertices as described in  Section~\ref{sec:results} with the following prioritization of pivots and their incident edges: 
\begin{quote}  
Prioritize the pivots $u$ from smallest to largest and then for each pivot, prioritize the edges $uv$ that can be removed from the current tree in increasing order of the label on $v$, and for each such $v$, 
prioritize the edges $uw$ that can be added to the current tree in increasing order of the label on $w$.
\end{quote}
Since this is a greedy strategy, if an edge pivot results in a spanning tree that has already been generated or a graph that is not a spanning tree, then the next highest priority edge pivot is attempted. Let \textsc{Greedy}$(T)$ denote the listing that results from applying this greedy approach starting with the spanning tree $T$. The starting tree that produced a nice exhaustive listing was the path $v_\infty, v_2, v_3, \ldots, v_n$, denoted $P_n$ throughout the paper. 
Figure~\ref{fig:F2F3F4F5} shows the listings  \textsc{Greedy}$(P_n)$ for $n=2,3,4,5$.  The listing  \textsc{Greedy}$(P_6)$ is illustrated in Figure~\ref{fig:F6Generation}.  It is worth noting that starting with the path $v_\infty,v_n, v_{n-1}, \ldots, v_2$ or the star (all edges incident to $v_\infty$) did not lead to an exhaustive listing for the spanning trees of $F_n$ in our study. 

\begin{figure}
\begin{center}
  \includegraphics[scale=0.235]{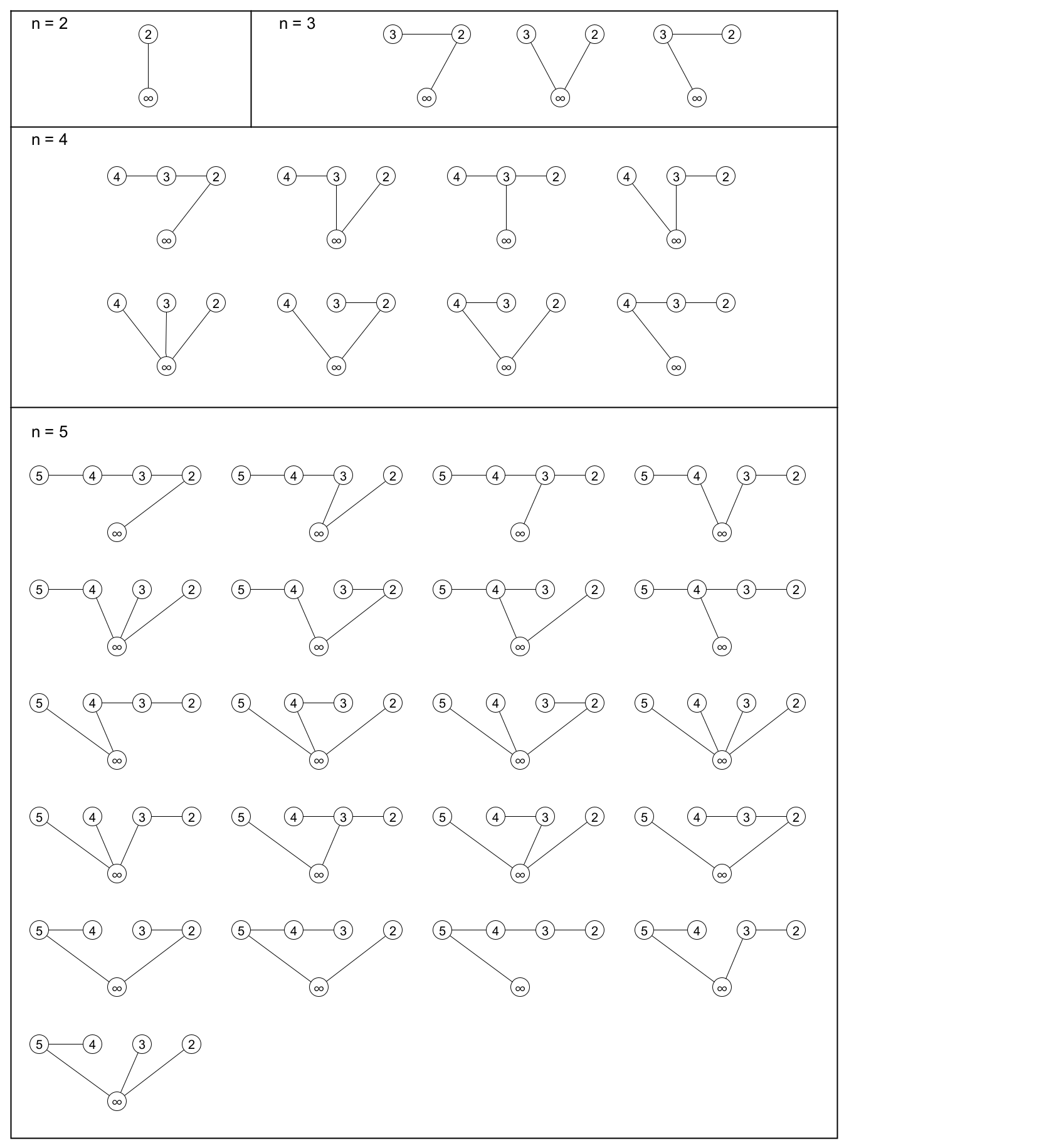}
  \caption{\textsc{Greedy}$(P_n)$ for $n=2,3,4,5$. Read left to right, top to bottom.}
  \label{fig:F2F3F4F5}
\end{center}
\end{figure}

\begin{figure}
\begin{center}
  \includegraphics[scale = 0.33]{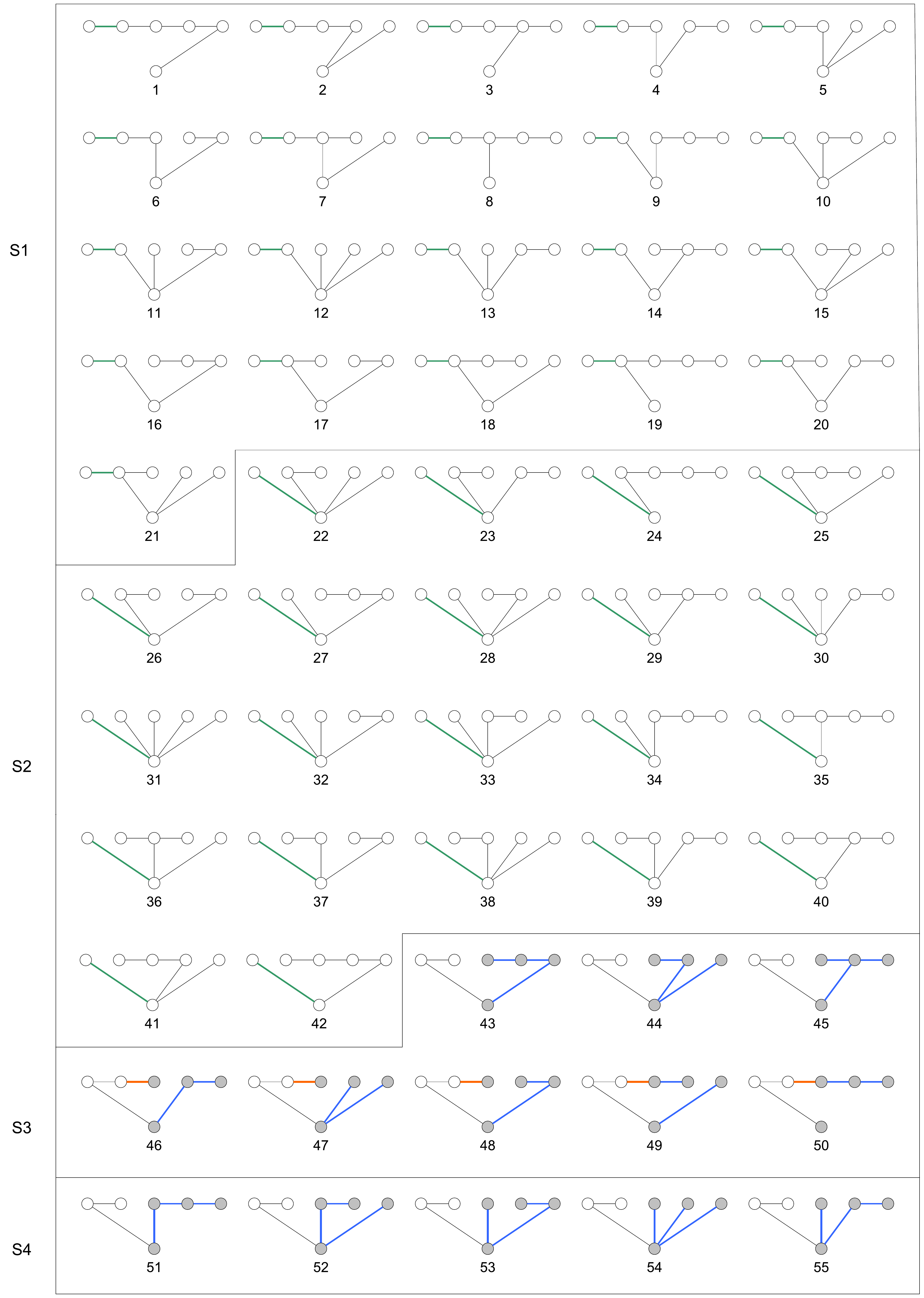}
  \caption{\textsc{Greedy}$(P_6)$ read from left to right, top to bottom. Observe that S1 is \textsc{Greedy}$(P_5)$ with $v_6 v_5$ added,  S2 is the reverse of \textsc{Greedy}$(P_5)$ with $v_6 v_\infty$ added,  S3 is \textsc{Greedy}$(P_4)$ with $v_6 v_5$ and $v_6 v_\infty$ added, except the edge $v_4 v_\infty$ is replaced by $v_4 v_5$, and S4 is the last five trees of \textsc{Greedy}$(P_4)$ in reverse order ($v_4 v_\infty$ is now present) with $v_6 v_5$ and $v_6 v_\infty$ added.}
  \label{fig:F6Generation}
\end{center}
\end{figure}

\begin{exam}
Consider the listing  \textsc{Greedy}$(P_5)$ in Figure~\ref{fig:F2F3F4F5}.  When the current tree $T$ is the 16th one in the listing (the one with edges $\{v_2v_\infty, v_2v_3, v_3v_4, v_5v_\infty\}$), the first pivot considered is $v_2$.  Since both $v_2v_3$ and $v_2v_\infty$ are present in the tree, no valid move is available by pivoting on $v_2$.  The next pivot considered is $v_3$.  Both edges $v_3v_2$ and $v_3v_4$ are incident with $v_3$. First, we attempt to remove $v_3v_2$ and add $v_3v_\infty$, which results in a tree previously generated. Next, we attempt to remove $v_3v_4$ and add $v_3v_\infty$, which results in a cycle. So, the next pivot, $v_4$, is considered.  The only edge incident to $v_4$ is $v_4v_3$.  By removing $v_4v_3$ and adding $v_4v_5$ we obtain a new spanning tree, the next tree in the greedy listing.
\end{exam}

 To prove that \textsc{Greedy}$(P_n)$ does in fact contain all the spanning trees of $F_n$, the next section
demonstrates it is equivalent to a recursively constructed listing obtained by studying the greedy listings.  Before we describe this recursive construction we mention one rather remarkable property of \textsc{Greedy}$(P_n)$ that we prove later in Section~\ref{sec:proofs}.

\begin{remark}
Let $X_n$ be last tree in the listing \textsc{Greedy}$(P_n)$. Then \textsc{Greedy}$(X_n)$ is precisely \textsc{Greedy}$(P_n)$ in reverse order.
\end{remark}


\section{A pivot Gray code for the spanning trees of $F_n$ via recursion}  \label{sec:recursion}

In this section we develop an efficient recursive algorithm to construct the listing  \textsc{Greedy}$(P_n)$.  The construction generates some sub-lists in reverse order, similar to the recursive construction of the BRGC.  The recursive properties allow us to provide efficient ranking and unranking algorithms for the listing based on counting the number of trees at each stage of the construction.   Let \spt{n} denote the number of spanning trees of $F_n$. It is known that 
$$\spt{n} = f_{2(n-1)} =  2 \frac{((3-\sqrt{5})/2)^{n}-((3+\sqrt{5})/2)^{n-2}}{5-3\sqrt{5}},$$ 
where $f_n$ is the $n$th number of the Fibonacci sequence with $f_1=f_2=1$ \cite{fanformula}.

\subsection{Pivot Gray code construction}

By studying the order of the spanning trees in \textsc{Greedy}$(P_n)$,
we identified four distinct stages S1, S2, S3, S4 that are highlighted for \textsc{Greedy}$(P_6)$ in 
Figure~\ref{fig:F6Generation}.  From this figure, and referring back to Figure~\ref{fig:F2F3F4F5} to see the recursive properties, observe that:
\begin{itemize}
    \item The trees in S1 are  equivalent to \textsc{Greedy}$(P_{5})$ with the added edge $v_6 v_{5}$. 
    
    \item The trees in S2 are equivalent to the reversal of the trees in \textsc{Greedy}$(P_{5})$ with the added edge $v_6 v_{\infty}$. 
\end{itemize}
\noindent
The trees in S3 and S4 have both edges $v_6 v_{5}$ and $v_6 v_{\infty}$ present.  
\begin{itemize}

    \item In S3,  focusing only on the vertices $v_4,v_3,v_2,v_\infty$, the induced subgraphs correspond to  \textsc{Greedy}$(P_{4})$, except whenever  $v_4v_\infty$ is present, it is replaced with $v_4v_5$ (the last five trees).
    \item  In S4, focusing only on the vertices $v_4,v_3,v_2,v_\infty$, the induced subgraphs correspond to the trees in \textsc{Greedy}$(P_{4})$
    where $v_4v_\infty$ is  present, in reverse order.
\end{itemize} 

Generalizing these observations for all $n \geq 2$ leads to the recursive procedure given in Algorithm~\ref{alg: Gen}, called {\sc Gen}($k,s_1,\mathit{varEdge}$). It uses a global variable $T$ to store the current spanning tree with $n$ vertices. The parameter $k$ indicates the number of vertices under consideration; the parameter $s_1$ indicates whether or not to generate the trees in stage S1, as required by the trees for S4; and the parameter $\mathit{varEdge}$ indicates whether or not a variable edge needs to be added as required by the trees for S3.   The base cases correspond to the edge moves in the listings \textsc{Greedy}$(P_2)$ and \textsc{Greedy}$(P_3)$.  

\begin{result} \noindent
Let $\LIST{n}$ denote the listing obtained by initializing $T$ to $P_n$, printing $T$, and  calling {\sc Gen}($n,1,0$).
\end{result}

\begin{algorithm}[t]
    \footnotesize \caption{}
    \label{alg: Gen}
	\begin{algorithmic}[1]
	    \Procedure{Gen}{$k, s_1, varEdge$}
		\If{$k = 2$} \Comment{$F_2$ base case}
			\If{$varEdge$} $T \gets T - v_2 v_\infty + v_2 v_3$; \textsc{Print}$(T)$
			\EndIf
		\ElsIf{$k = 3$} \Comment{$F_3$ base case}
			 \If{$s_1$} 
				\If{$varEdge$} $T \gets T - v_3 v_2 + v_3 v_4$; \textsc{Print}$(T)$
				\Else{ $T \gets T - v_3 v_2 + v_3 v_\infty$}; \textsc{Print}$(T)$
				\EndIf
			\EndIf	
		 \State $T \gets T - v_2 v_\infty + v_2 v_3$; \textsc{Print}$(T)$
		\Else 
			\If{$s_1$} 
				\State \Gen{}$(k-1, 1, 0)$	 \Comment{S1}
				\If{$varEdge$} $T \gets T - v_k v_{k-1} + v_k v_{k+1}$; \textsc{Print}$(T)$
				\Else{ $T \gets T - v_k v_{k-1} + v_k v_\infty$}; \textsc{Print}$(T)$
				\EndIf
			\EndIf 
			\State \RevGen{}$(k-1, 1, 0)$ \Comment{S2}
			\State $T \gets T - v_{k-1} v_{k-2} + v_{k-1} v_k$; \textsc{Print}$(T)$
			\State	\Gen{}$(k-2, 1, 1)$  \Comment{S3}
			\If{$k > 4$} $T \gets T - v_{k-2} v_{k-1} + v_{k-2} v_\infty$; \textsc{Print}$(T)$
			\EndIf 
			\State \RevGen{}$(k-2, 0, 0)$  \Comment{S4}
		\EndIf
		\EndProcedure
	\end{algorithmic}
\end{algorithm}

Before discussing {\sc RevGen},
we first provide a formal description of the last tree in the listing $\LIST{n}$, which we denote $L_n$.
Define the tree $Last_n$ as follows for $n \geq 2$: for $n=2,3,4$ let $Last_n$ be the last trees in the listings for $n=2,3,4$ given in Figure~\ref{fig:F2F3F4F5}, and for $n > 4$ let 
\begin{equation*}
    Last_n = Last_{n-3} + v_n v_{n-1} + v_n v_\infty + v_{n-2} v_\infty.
\end{equation*}
Applying this definition, the trees $Last_n$ for $2\leq n \leq 7$ are given in Figure~\ref{fig:LastTrees}.  The following lemma is proved in Section~\ref{sec:proofs}.

\begin{lemma} \label{lem:last}
For $n \geq 2$, $L_n = Last_n$.
\end{lemma}
 
\begin{figure}
\begin{center} 
  \includegraphics[scale = 0.27]{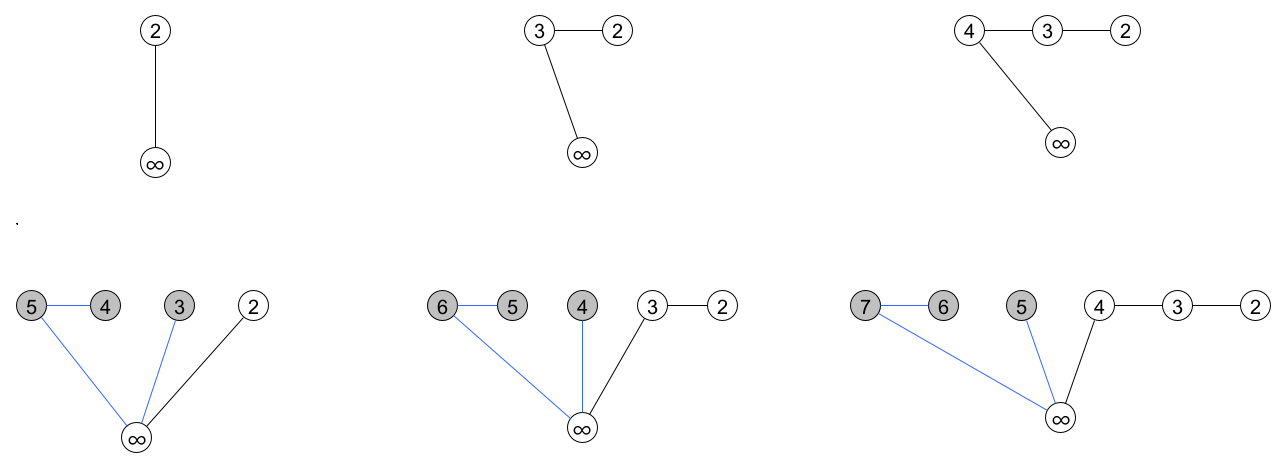}
  \caption{From left to right. \{$Last_2$, $Last_3$, $Last_4$\} (top), \{$Last_5$, $Last_6$, $Last_7$\} (bottom). Shaded vertices and blue edges highlight the additional vertices and edges added to the trees in the top row to obtain the trees in the bottom row. }
  \vspace{-1cm}
  \label{fig:LastTrees}
\end{center}
\end{figure}

The procedure {\sc RevGen}($k,s_1,\mathit{varEdge}$),  performs the operations of {\sc Gen}($k,s_1,\mathit{varEdge}$) in reverse order, thus producing the reversal of the listing generated by {\sc Gen}($k,s_1,\mathit{varEdge}$) when starting with the last tree from the latter listing.  The base cases correspond to the edge moves in the reversals of the listings \textsc{Greedy}$(P_2)$ and \textsc{Greedy}$(P_3)$. 

\begin{result}
\noindent
Let $\REVLIST{n}$ denote the listing obtained by initializing $T$ to $L_n$, printing $T$, and  calling {\sc RevGen}($n,1,0$). 
\end{result}

\begin{remark}
$\REVLIST{n}$ is the listing $\LIST{n}$ in reverse order. 
\end{remark}

\begin{algorithm}[t]
    \footnotesize \caption{}\label{alg: RevGen}
	
	\begin{algorithmic}[1]
	   \Procedure{RevGen}{$k, s_1, varEdge$}
		\If{$k = 2$} \Comment{$F_2$ base case}
			\If{$varEdge$} $T \gets T - v_2 v_3 + v_2 v_\infty$; \textsc{Print}$(T)$
			\EndIf
		\ElsIf{$k = 3$} \Comment{$F_3$ base case}
			\State $T \gets T - v_2 v_3 + v_2 v_\infty$; \textsc{Print}$(T)$
			 \If{$s_1$} 
				\If{$varEdge$}  $T \gets T - v_3 v_4 + v_3 v_2$; \textsc{Print}$(T)$
				\Else{ $T \gets T - v_3 v_\infty + v_3 v_2$}; \textsc{Print}$(T)$
				\EndIf
			\EndIf	
		\Else 
			\State \Gen{}$(k-2, 0, 0)$  \Comment{S4}
			\If{$k > 4$} $T \gets T - v_{k-2} v_\infty + v_{k-2} v_{k-1}$; \textsc{Print}$(T)$
			\EndIf 	
			\State	\RevGen{}$(k-2, 1, 1)$  \Comment{S3}
			\State $T \gets T - v_{k-1} v_k + v_{k-1} v_{k-2}$; \textsc{Print}$(T)$
			\State \Gen{}$(k-1, 1, 0)$ \Comment{S2}
			\If{$s_1$} 
				\If{$varEdge$} $T \gets T - v_k v_{k+1} + v_k v_{k-1}$; \textsc{Print}$(T)$
				\Else{ $T \gets T - v_k v_\infty + v_k v_{k-1}$}; \textsc{Print}$(T)$
				\EndIf
				\State \RevGen{}$(k-1, 1, 0)$	 \Comment{S1}
			\EndIf 
		\EndIf
	    \EndProcedure
	\end{algorithmic}
\end{algorithm}


\begin{theorem} \label{thm:main}
For $n\geq 2$, $\LIST{n}$ and $\REVLIST{n}$ are pivot Gray codes for the spanning trees of the fan $F_n$ and they can be generated in $O(1)$-amortized time using $O(n)$ space.  Moreover, \textsc{Greedy}$(P_n)$ = $\LIST{n}$ and \textsc{Greedy}$(L_n)$ = $\REVLIST{n}$.
\end{theorem}
We prove this theorem in Section~\ref{sec:proofs}.   

\subsection{Ranking}  \label{sec:ranking}


Given a spanning tree $T$ in $\LIST{n}$, we calculate its rank by recursively determining which stage (recursive call) $T$ is generated. We can determine the stage by focusing on the presence/absence of the edges $v_n v_{n-1}$, $v_n v_\infty$, $v_{n-2} v_\infty$, and $v_{n-2} v_{n-1}$.  Based on the discussion of the recursive algorithm, there are $t_{n-1}$ trees generated in S1,  $t_{n-1}$ trees generated in S2,  $t_{n-2}$ trees generated in S3, and $t_{n-2} - t_{n-3}$ trees generated in S4. S3 is partitioned into two cases based on whether $v_{n-2} v_{n-1}$ ($varEdge)$ is present. For the remainder of this section we will let $T_{n-1} = T - v_n$ and $T_{n-2} = T - v_n - v_{n-1}$.

If $v_n v_{n-1}, v_n v_\infty, v_{n-2} v_\infty \in T$, then $T$ is a tree in S4 of $\LIST{n}$. The trees of S4 are the trees of $\LIST{n-2}$ without S1, listed in reverse order. So, the rank can be calculated by subtracting the rank of $T_{n-2}$ in $\LIST{n-2}$ from $2\spt{n-1} + 2\spt{n-2}$ (the rank of the last tree of S3 plus $\spt{n-2}$). Note that we do not use $2 \spt{n-1} + 2 \spt{n-2} - \spt{n-3}$ because the recursive rank calculated already takes into account the trees of $S1$ that are missing.

If $v_n v_{n-1}, v_n v_\infty \in T$ and $v_{n-2} v_\infty \not \in T$, then $T$ is a tree in S3 of $\LIST{n}$. The trees of S3 are the trees of $\LIST{n-2}$ where $v_{n-2} v_\infty$ has been replaced by $v_{n-2} v_{n-1}$. So, if $v_{n-2} v_{n-1} \in T$, then in order to recursively calculate the rank of $T_{n-2}$ in $\LIST{n-2}$, we need to replace $v_{n-2} v_{n-1}$ with $v_{n-2} v_\infty$. If $v_{n-2} v_{n-1} \not \in T$, then no edge replacements are needed. We can then determine the rank of $T$ in $\LIST{n}$ by adding the rank of $T_{n-2}$ in $\LIST{n-2}$ to $2 \spt{n-1}$ (the rank of the last tree of S2).

The other two cases are fairly trivial. If $v_n v_{n-1} \in T$ and $v_n v_\infty \not \in T$, then $T$ is in S1. Since S1 is the trees of $\LIST{n-1}$ with $v_n v_{n-1}$ added, we simply return the ranking of $T_{n-1}$ in $\LIST{n-1}$. If $v_n v_\infty \in T$ and $v_n v_{n-1} \not \in T$, then $T$ is in S2. Since S2 is the trees of $\LIST{n-1}$ in reverse order with $v_n v_\infty$ added, we return $2  \spt{n-1} + 1$ (the rank of the first tree of S3) minus the rank of $T_{n-1}$ in $\LIST{n-1}$.

For $n>1$, let $R_n(T)$ denote the rank of $T$ in the listing $\LIST{n}$. If $n=2,3,4$, then $R_n(T)$ can easily be derived from Figure~\ref{fig:F2F3F4F5}.  Based on the above discussion, for $n\geq 5$: 
\begin{small}
\begin{equation} \label{eq:rank}
 R_n(T) = 
       \begin{cases}
		2 \spt{n-1} + 2 \spt{n-2} - R_{n-2}(T_{n-2}) + 1 
		& \text{if $e_1, e_2, e_3 \in T$} \\ 

		2 \spt{n-1} + R_{n-2}(T_{n-2} + e_3) 
		& \text{if $e_1, e_2, e_4 \in T$, $e_3 \not \in T$} \\

		2 \spt{n-1} + R_{n-2}(T_{n-2})
		& \text{if $e_1, e_2 \in T$, $e_3, e_4 \not \in T$} \\ 

		2 \spt{n-1} - R_{n-1}(T_{n-1}) + 1 & \text{if $e_2 \in T$, $e_1 \not \in T$} \\ 
		R_{n-1}(T_{n-1}) & \text{if $e_1 \in T$, $e_2 \not \in T$}
        \end{cases} 
\end{equation}
\end{small}
where $e_1 = v_n v_{n-1}$, $e_2 = v_n v_\infty$, $e_3 = v_{n-2} v_\infty$, and $e_4 = v_{n-2} v_{n-1}$.

\medskip

\begin{exam} 
Consider the spanning trees $T_7$, $T_6$, $T_5$, and $T_3$ for $F_7, F_6, F_5$ and $F_3$ respectively.   

\begin{center}
  \includegraphics[scale = 0.16]{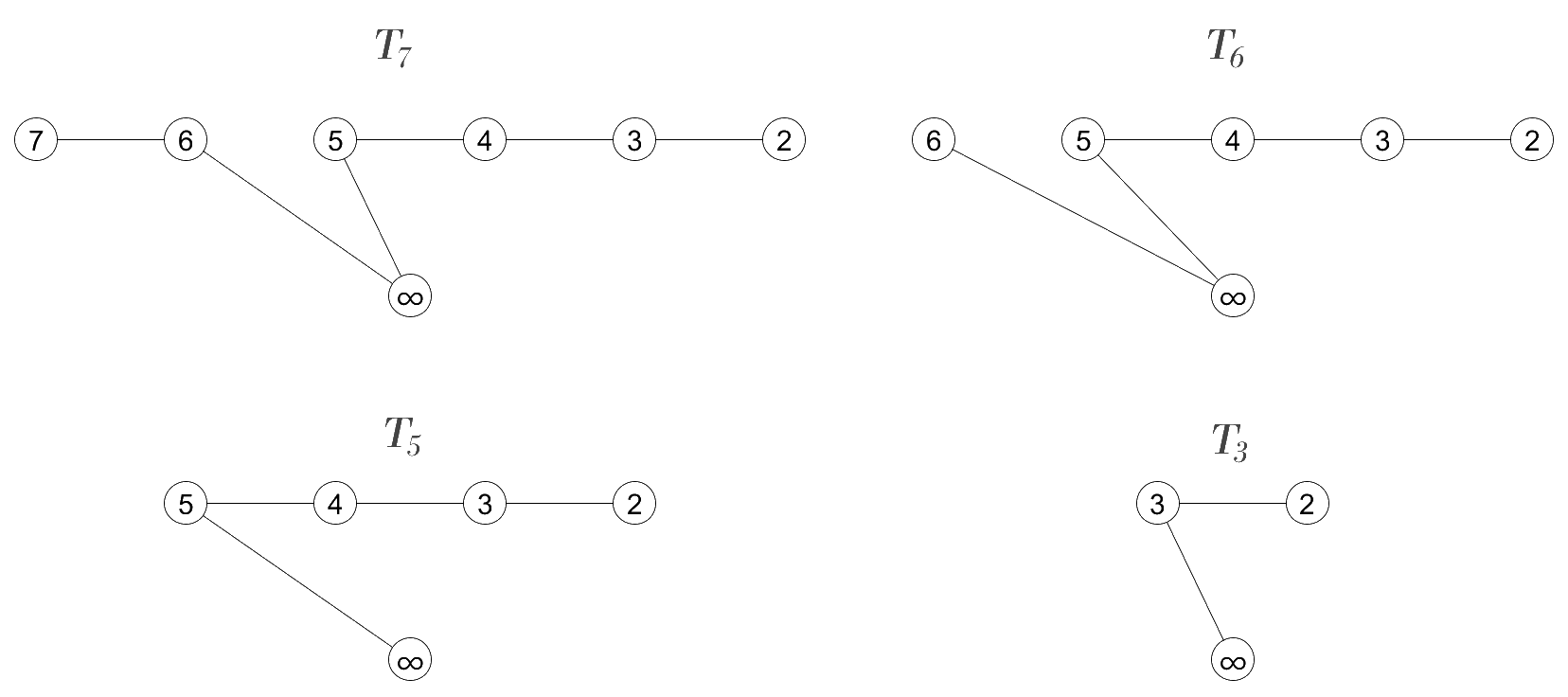}
  \label{fig:RankingExample}
\end{center}

\noindent Observe that
$T_6 = T_7 - v_7$, 
$T_5 = T_6 - v_6$, and
$T_3 = T_5 - v_5 - v_4 + v_3v_\infty$.
Consider $R_7(T)$ where $T= T_7$.  Applying the formula in (\ref{eq:rank})  we have
\begin{equation*}
    \begin{split}
        R_7(T)   & = R_{6}(T_6) \\
                    & = 2 \spt{5} - R_{5}(T_5) + 1\\
                    & = 43 - (2 \spt{4} + R_3(T_3 + v_3 v_\infty)) \\
                    & = 43 - (16 + 3) \\
                    & = 24.
    \end{split}
\end{equation*}

\end{exam}

Since each application of (\ref{eq:rank}) requires constant time, and the recursion goes $O(n)$ levels deep,  we arrive at the following result provided the first $2(n{-}2)$ Fibonacci numbers are precomputed. We note that the calculations are on numbers up to size $t_{n-1}$.

 \begin{theorem} \label{thm:rank}
 The listing $\LIST{n}$  can be ranked in $O(n)$ time using $O(n)$ space under the unit cost RAM model.
 \end{theorem}
 %


\subsection{Unranking}

Determining the tree $T$ at rank $r$ in the listing $\LIST{n}$ follows similar ideas by constructing $T$ starting from a set of $n$ isolated vertices and adding one edge at a time. If $0 < r \leq \spt{n-1}$ then $T$ must be a tree in S1 of $\LIST{n}$. So, we can  add $v_n v_{n-1}$ to $T$ and consider the rank $r$ tree in $\LIST{n-1}$. If $\spt{n-1} < r \leq 2 \spt{n-1}$, then $T$ is a tree in S2 of $\LIST{n}$. Since $S2$ of $\LIST{n}$ is simply $\REVLIST{n-1} + v_n v_\infty$, we can add $v_n v_\infty$ to $T$ and then consider the rank $2 \spt{n-1} + 1 - r$ (rank of the first tree of S3 minus $r$) tree in $\LIST{n-1}$. If $2 \spt{n-1} < r \leq 2 \spt{n-1} + \spt{n-2}$, then $T$ must be a tree in S3 of $\LIST{n}$. Since all trees in S3 have the edges $v_n v_{n-1}$ and $v_n v_\infty$, we can add these edges to $T$. Then, we can consider the rank $r - 2\spt{n-1}$ ($r$ minus the rank of the last tree of S2) tree of $\LIST{n-2}$. Also note that since $T$ is in S3, $v_{n-2} v_{n-1}$ will replace $v_{n-2} v_\infty$ for the trees of $\LIST{n-2}$. Otherwise, if $r > 2 \spt{n-1} + \spt{n-2}$, then $T$ must be in S4 of $\LIST{n-2}$. Similar to S3, we can add $v_n v_{n-1}$ and $v_n v_\infty$ to $T$ as all trees in S4 have these edges. Then, we consider the $2 \spt{n-1} + 2 \spt{n-2} - r + 1$ (rank of the last tree of S3 plus the rank of the last tree of $\LIST{n-2}$ minus $r$) tree of $\LIST{n-2}$.

Let $U_n(r, \mathit{replaceEdge})$ return the edges that form the tree $T$ at rank $r$ for the listing $\LIST{n}$. The parameter $\mathit{replaceEdge}$ indicates whether or not the edge $v_n v_{n+1}$ should  be added instead of $v_n v_\infty$. Initially, $r$ is the specified rank, and $\mathit{replaceEdge} = 0$.
In the base cases where $n=2,3,4$, then $T$ is  derived from Figure~\ref{fig:F2F3F4F5}. For these cases, if the edge $v_n v_\infty$ is present and $\mathit{replaceEdge} = 1$, then it is replaced by the edge $v_n v_{n+1}$. Based on the above discussion, we arrive at the following recursive construction for $U_n(r, \mathit{replaceEdge})$.

\begin{footnotesize}
\begin{equation} \label{eq:unrank} 
 U_n(r, \mathit{replaceEdge}) = 
       \begin{cases}
       	U_{n-1}(r, 0) + v_n v_{n-1}
    	& \text{if $0 < r \leq \spt{n-1}$,} \\ 
    
    	U_{n-1}(2 \spt{n-1} {-} r {+} 1, 0) + e
    	& \text{if $\spt{n-1} < r \leq 2 \spt{n-1}$,} \\ 
    
    	U_{n-2}(r {-} 2\spt{n-1}, 1) + v_n v_{n-1} + e
    	& \text{if $2 \spt{n-1} < r \leq 2 \spt{n-1} {+} \spt{n-2}$,} \\ 
    
    	U_{n-2}(2 \spt{n-1} {+} 2 \spt{n-2} {-} r + 1, 0) + v_n v_{n-1} + e
    	& \text{otherwise,}



        \end{cases}  
\end{equation}
\end{footnotesize}
where $e = v_n v_{n+1}$ if $\mathit{replaceEdge} = 1$ and $e = v_n v_\infty$ otherwise. 

\medskip
\begin{exam} \small
To find the 24th tree $T$ in the listing $\LIST{7}$, we consider $U_7(24, 0)$.  Repeated application of (\ref{eq:unrank}) yields the following 
\begin{eqnarray*}
U_7(24, 0)  &  =  & U_6(24, 0) +  v_7 v_6 \ \ \ \ \  \mbox{\blue{$\triangleright$ since $0 < 24 \leq \spt{6}$}} \\
                &  =  & U_5(19, 0) +  v_7 v_6 + v_6 v_\infty   \ \ \ \ \  \mbox{\blue{$\triangleright$  since $\spt{5}  < 24 \leq 2 \spt{5}$}} \\
                & = &  U_3(3, 1) +  v_7 v_6 + v_6 v_\infty + v_5 v_4 + v_5 v_\infty  \ \ \ \ \ \mbox{\blue{$\triangleright$  since $2 \spt{4}  < 19 \leq 2 \spt{4} + \spt{3} $}} \\
                & = &  v_7 v_6 + v_6 v_\infty + v_5 v_4 + v_5 v_\infty + v_3 v_3 + v_3 v_2 \ \ \ \mbox{} 
\end{eqnarray*}
Reaching a base case, the 3rd tree of $\LIST{3}$ is $\{ v_3 v_2, v_3 v_\infty \}$. Since $\mathit{replaceEdge} = 1$, the edge $v_3v_\infty$ is replaced with $v_3v_4$ and we end up with the spanning tree $T$ containing the edges from the last line of the equation.  These four steps to construct $T$ are illustrated below. 
%

\begin{center}
  \includegraphics[scale = 0.15]{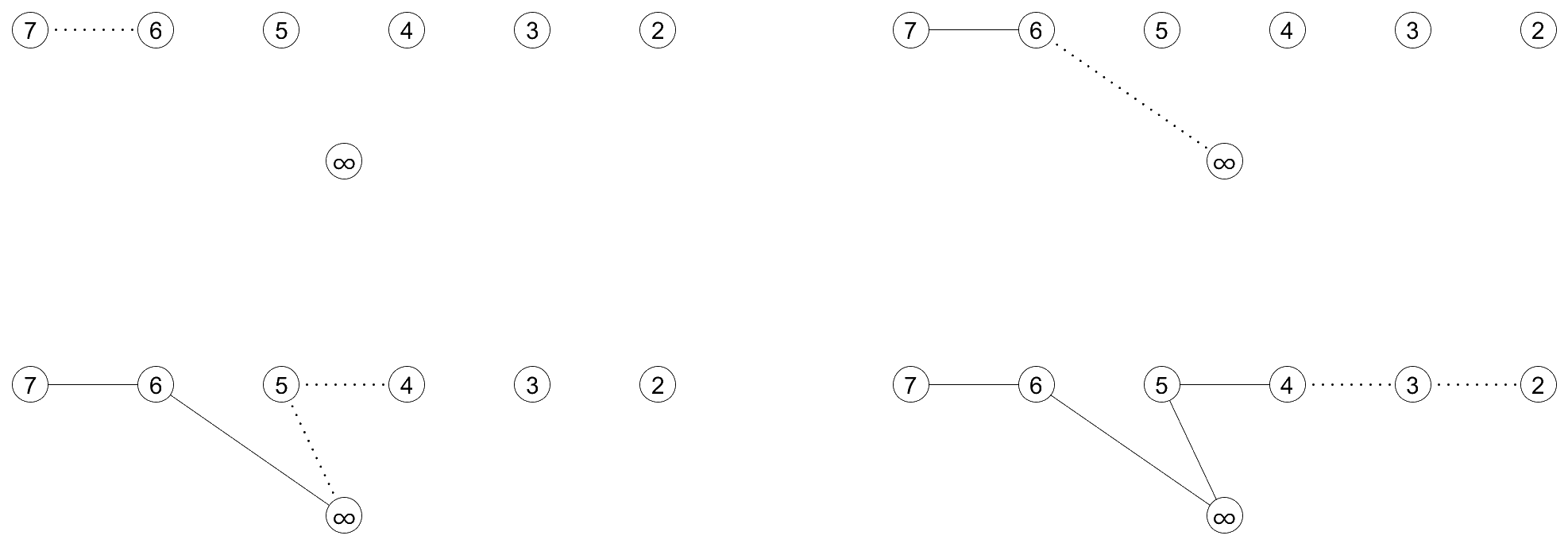}
  \label{fig:UnrankingExample}
\end{center}
\end{exam}

Since each application of (\ref{eq:unrank}) requires constant time, and the recursion goes $O(n)$ levels deep,  we arrive at the following result provided the first $2(n{-}2)$ Fibonacci numbers are precomputed. We note that the calculations are on numbers up to size $t_{n-1}$.

 \begin{theorem} \label{thm:unrank}
 The listing $\LIST{n}$  can be unranked in $O(n)$ time using $O(n)$ space under the unit cost RAM model.
 \end{theorem}
 

\section{Proof of technical results} \label{sec:proofs}

\subsection{Proof of Theorem~\ref{thm:main}}

To prove Theorem~\ref{thm:main}, we start by proving that the number of trees generated by $\LIST{n}$ is $t_n$.  Then we show that  $\LIST{n}$ = {\sc Greedy}($P_n$) and $\REVLIST{n}$ = {\sc Greedy}($L_n$).  Combining these results with the fact that the trees generated by the greedy approaches are unique and successive trees differ by the ``pivot'' of a single edge,  we have that $\LIST{n}$ $\REVLIST{n}$ are pivot Gray codes for the spanning trees of the fan graph $F_n$.  Finally, we verify the running time of the recursive algorithm to generate $\LIST{n}$.


Before proving these results, we introduce some notation.
Let $T - v_i$ denote the tree obtained from $T$ by deleting the vertex $v_i$  along with all edges that have $v_i$ as an endpoint. Let $T + v_i v_j$ (resp. $T-v_i v_j$) denote the tree obtained from $T$ by adding (resp. deleting) the edge $v_i v_j$. For the remainder of this section, we will let $T_n$ denote the tree $T$ specified as a global variable for \Gen{} and \RevGen{}, and we let $T_{n-1}=T-v_n$ and $T_{n-2}=T-v_n-v_{n-1}$. 

\begin{lemma} \label{EdgeMovesLemma}
For $n \geq 2$, $|\LIST{n}| = |\REVLIST{n}| = \spt{n}$.
\end{lemma}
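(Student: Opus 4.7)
My plan is to translate the recursive structure of \Gen{} into a linear recurrence for $|\LIST{n}|$ and show it matches the three-term recurrence satisfied by $\spt{n}$. The first step is to define $g(k, s_1, v)$ to be the number of \textsc{Print} statements executed by a call to \Gen{}$(k, s_1, v)$, so that $G(k) := |\LIST{k}| = 1 + g(k, 1, 0)$ (accounting for the initial print of $P_k$). Reading directly off Algorithm~\ref{alg: Gen}, the base cases are $g(2, s_1, 0) = 0$, $g(2, s_1, 1) = 1$, $g(3, 0, v) = 1$, $g(3, 1, v) = 2$, and summing the contributions from the four recursive calls S1--S4 and the intervening explicit prints gives, for $k \geq 4$,
\begin{align*}
g(k, 1, v) &= 2g(k-1, 1, 0) + g(k-2, 1, 1) + g(k-2, 0, 0) + 2 + [k>4], \\
g(k, 0, v) &= g(k-1, 1, 0)  + g(k-2, 1, 1) + g(k-2, 0, 0) + 1 + [k>4].
\end{align*}
Since \RevGen{} performs the same moves in reverse order, an identical analysis applies and $|\REVLIST{n}| = |\LIST{n}|$, so it suffices to prove $|\LIST{n}| = \spt{n}$.

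The next step is to derive the corresponding recurrence for $\spt{n} = f_{2(n-1)}$: two applications of $f_m = f_{m-1} + f_{m-2}$ yield $\spt{n} = 3\spt{n-1} - \spt{n-2}$ for $n \geq 4$. I then verify the base cases $G(2) = 1 = \spt{2}$, $G(3) = 3 = \spt{3}$, and $G(4) = 8 = \spt{4}$ directly by evaluating the recurrence above with the base values of $g$.

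The crux of the induction is the identity
$$g(k, 1, v) - g(k, 0, v) = g(k-1, 1, 0) + 1 = G(k-1),$$
which is immediate by subtracting the two lines of the recurrence (and can be checked to hold already at $k = 3$). Combined with the observation that $g(k,s_1,v)$ is independent of $v$ once $k \geq 3$, this identity lets me eliminate $g(k-2, 1, 1) = G(k-2) - 1$ and $g(k-2, 0, 0) = G(k-2) - G(k-3) - 1$ from the formula for $g(n, 1, 0)$. After simplification the recurrence collapses, for $n \geq 5$, to $G(n) = 2G(n-1) + 2G(n-2) - G(n-3)$; invoking the strong inductive hypothesis $G(n-1) = 3G(n-2) - G(n-3)$ then yields $G(n) = 3G(n-1) - G(n-2) = \spt{n}$, completing the induction.

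The main obstacle is purely the bookkeeping: the three Boolean parameters $s_1, v$ and the extra conditional print when $k > 4$ mean the recurrences for different parameter settings interact, and the clean three-term recurrence for $G$ only kicks in at $n = 5$, so the base cases $k = 2, 3, 4$ must be handled separately. Once that is done, the algebra is routine and the induction closes cleanly.
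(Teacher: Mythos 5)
Your proof is correct, and its skeleton is the same as the paper's: induct on $n$, count the trees printed in each of the four stages S1--S4, and close with the Fibonacci identity $\spt{n}=2\spt{n-1}+2\spt{n-2}-\spt{n-3}$ (you reach it via the equivalent two-term form $\spt{n}=3\spt{n-1}-\spt{n-2}$). Where you genuinely diverge is in the bookkeeping device. The paper's inductive hypothesis is phrased in terms of the listings $\LIST{j}$ and $\REVLIST{j}$, which are defined with specific starting trees, so its proof must carry the state of the global tree through the recursion (e.g.\ that $T-v_n$ equals $P_{n-1}$ entering S1 and $L_{n-1}$ entering S2, and that each explicit move removes a present edge and adds an absent one); the count for S4 is then obtained by noting which lines of \RevGen{} are skipped when $s_1=0$. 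You instead observe that the number of \textsc{Print} executions depends only on $(k,s_1,\mathit{varEdge})$, never on the tree, and turn the whole question into coupled counting recurrences, with the identity $g(k,1,v)-g(k,0,v)=G(k-1)$ playing the role of the paper's ``$s_1=0$ skips S1'' argument. This buys a shorter, more mechanical proof of the counting lemma and cleanly isolates the only two facts needed (reversal symmetry of \Gen{}/\RevGen{} counts, which you should really phrase as a joint induction on $g$ and its \RevGen{} counterpart, and insensitivity to $\mathit{varEdge}$ for $k\ge 3$); what it does not buy is the tree-state information ($T=L_{n-1}+v_nv_\infty$ after S1, $T=P_{n-2}+v_nv_{n-1}+v_nv_\infty$ after S2, etc.), which the paper's version establishes along the way and then reuses in the subsequent proof that $\LIST{n}=\textsc{Greedy}(P_n)$.
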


\begin{proof}
We first note that since \Gen{} and \RevGen{} are exact reversals of each other, \Gen{}$(n, s_1, varEdge)$ starting with $T=P_n$ and \RevGen{}$(n, s_1, varEdge)$ starting with $T=L_n$ produce the same number of trees. The proof now proceeds by induction on $n$.  It is easy to verify the result holds for $n=2,3,4$. Now assume $n>4$, and that $|\LIST{j}| = \spt{j}$, for $2 \leq j<n$.  We consider the number of trees generated by each of the four stages of \Gen{}$(n-1, 1, 0)$ when starting with $P_n$.\\

\noindent \underline{S1:} Since $n>4$ and $s_1 = 1$, \Gen{}$(n-1, 1, 0)$  is executed. Since $T_n = P_n$, we have that $T_{n-1} = P_{n-1}$. So, by our inductive hypothesis, $\spt{n-1}$ trees are printed. By definition of $L_n$, $T_{n-1} = L_{n-1}$ after \Gen{}$(n-1, 1, 0)$. It follows that $T_n = L_{n-1} + v_n v_{n-1}$. Line 12 removes $v_n v_{n-1}$ and adds $v_n v_\infty$. Since $v_n v_{n-1} \in T$ and $v_n v_\infty \not \in T$, this results in one more tree printed. At this point, $T_n = L_{n-1} + v_n v_\infty$. \\

\noindent \underline{S2:} Next, line 14 executes \RevGen{}$(n-1, 1, 0)$. We have that $T_{n-1} = L_{n-1}$. So, by the inductive hypothesis, $\spt{n-1}$ trees are printed. We know that $T_{n-1} = P_{n-1}$ after \RevGen{}$(n-1, 1, 0)$ starting with $T_{n-1} = L_{n-1}$, so it follows that $T_n = P_{n-1} + v_n v_\infty$. Line 15 removes $v_{n-2} v_{n-1}$ and adds $v_{n-1} v_n$. Since $v_{n-2} v_{n-1} \in T$ (because $v_{n-2} v_{n-1} \in P_{n-1}$) and $v_{n-1} v_n \not \in T$, this results in one more tree printed. At this point, $T_n = P_{n-2} + v_n v_\infty + v_{n-1} v_n$.\\

\noindent \underline{S3:} Line 16 then executes \Gen{}$(n-2, 1, 1)$ with $T_{n-2} = P_{n-2}$ since $T_n = P_{n-2} + v_n v_\infty + v_{n-1} v_n$. Note that the only difference between \Gen{}$(n-2, 1, 1)$ and \Gen{}$(n-2, 1, 0)$ is that $v_j v_{j+1}$ is added instead of $v_j v_\infty$ since $n-2>2$. Also, $v_{n-2} v_{(n-2)+1} \not \in T_n$ so it can be added. It follows that \Gen{}$(n-2, 1, 1)$ and \Gen{}$(n-2, 1, 0)$ will output the same number of trees starting with $T_{n-2} = P_{n-2}$. So, line 16 results in $\spt{n-2}$ trees printed, again by the inductive hypothesis. After line 16 is executed, we have $T_{n-2}=L_{n-2} - v_{n-2} v_\infty + v_{n-2} v_{n-1}$ since $varEdge$ was equal to $1$. Line 17 removes $v_{n-2} v_{n-1}$ and adds $v_{n-2} v_\infty$. Since $v_{n-2} v_{n-1} \in T$ and $v_{n-2} v_\infty \not \in T$, this results in one more tree printed. At this point, $T_n = L_{n-2} + v_n v_{n-1} + v_n v_\infty$.\\

\noindent \underline{S4:} By our inductive hypothesis, $|\REVLIST{n-2}| = \spt{n-2}$. However, $s_1 = 0$ for line 18 (\RevGen{}$(n-2, 0, 0)$). So, for \RevGen{}$(n-2, 0, 0)$, line 17 (one tree) and line 18 ($\spt{n-3}$ trees) are not executed. This results in a total of $\spt{n-2} - \spt{n-3}$ trees being printed by line 18 of \Gen{}$(n, 1, 0)$. 

In total, $2 \spt{n-1} + 2 \spt{n-2} - \spt{n-3}$ trees are printed. By a straightforward Fibonacci identity which we leave to the reader, we have that $\spt{n} = 2 \spt{n-1} + 2 \spt{n-2} - \spt{n-3}$. Therefore, $|\LIST{n}| = |\REVLIST{n}| = \spt{n}$. 
\end{proof}

To prove the next result, we first detail some required terminology.
If $T$ is a spanning tree of $F_n$, then we say that the operation of deleting an edge $v_i v_j$ and adding an edge $v_i v_k$ is a \emph{valid} edge move of $T$ if the result is a spanning tree that has not been generated yet. Conversely, if the result is not a spanning tree, or the result is a tree that has already been generated, then it is not a \emph{valid} edge move of $T$. We say an edge $v_i v_j$ is \emph{smaller} than edge $v_i v_k$ if $j<k$. An edge move $T_n - v_i v_j + v_i v_k$ is said to be \emph{smaller} than another edge move $T_n - v_x v_y + v_x v_z$ if $i<x$, if $i=x$ and $j<y$, or if $i=x$, $j=y$, and $k<z$.

\begin{lemma} \label{Gen=GreedyTheorem}  
For $n\geq 2$,  $\LIST{n} = \textsc{Greedy}(P_n)$ and $\REVLIST{n} = \textsc{Greedy}(L_n)$.
\end{lemma}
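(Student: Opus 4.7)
The plan is to prove both equalities simultaneously by strong induction on $n$. By Lemma~\ref{EdgeMovesLemma}, both $\LIST{n}$ and $\textsc{Greedy}(P_n)$ contain exactly $\spt{n}$ trees and both begin at $P_n$; since the greedy algorithm is deterministic, it suffices to verify that every consecutive edge move in $\LIST{n}$ coincides with the greedy priority, i.e.\ is the smallest valid edge move from the current tree. The statement for $\REVLIST{n}$ follows from the same stage-by-stage analysis applied to Algorithm~\ref{alg: RevGen} (which is the exact reversal of Algorithm~\ref{alg: Gen}), so I would describe the forward direction in detail and remark that the reverse direction is symmetric. The base cases $n = 2, 3$ can be read directly from Fig.~\ref{fig:F2F3F4F5}.

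For the inductive step ($n \geq 4$), I would split the argument into the four stages S1--S4 of Algorithm~\ref{alg: Gen} and the three transition moves on lines~12, 15, and 17. The structural invariant that drives S1 and S2 is that $v_n$ remains a leaf throughout each stage, with its unique tree-edge fixed ($v_n v_{n-1}$ in S1, $v_n v_\infty$ in S2). Consequently, for any pivot $v_i$ with $i \leq n-1$, the set of valid edge moves on $F_n$ agrees with the set of valid edge moves on $F_{n-1}$; the only potential discrepancy occurs for $i = n-1$, where removing $v_{n-1} v_n$ would disconnect the leaf $v_n$ and so is invalid. By the inductive hypothesis, \Gen{}$(n-1,1,0)$ and \RevGen{}$(n-1,1,0)$ therefore realize $\textsc{Greedy}(P_{n-1})$ and $\textsc{Greedy}(L_{n-1})$, respectively, on the induced subtree. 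The only pivot on $v_n$ available in either stage swaps its single tree-edge with the other, which has strictly lower priority than any pivot on $v_i$ for $i < n$; hence the greedy on $F_n$ defers this swap until all smaller-pivot moves are exhausted, which, by Lemma~\ref{EdgeMovesLemma} applied to $F_{n-1}$, coincides exactly with line~12 (respectively the end of S2). Trees in S1 and S2 are distinct because every tree in S1 contains $v_n v_{n-1}$ and every tree in S2 does not, so the already-generated check of the greedy algorithm is consistent with the algorithm's behaviour.

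The main obstacle is stages S3 and S4, where both $v_n v_{n-1}$ and $v_n v_\infty$ are present and thus tie the behaviour of $v_{n-1}$ and $v_\infty$ through a triangle on $\{v_{n-1}, v_n, v_\infty\}$. This coupling is exactly what the $\mathit{varEdge}$ flag encodes. I would show that the substitution of $v_{n-2} v_\infty$ by $v_{n-2} v_{n-1}$ inside \Gen{}$(n-2, 1, 1)$ induces a bijection between spanning trees of $F_{n-2}$ starting from $P_{n-2}$ and spanning trees of $F_n$ that contain both $v_n v_{n-1}$ and $v_n v_\infty$, and that this bijection preserves both the validity and the relative greedy priority of every edge move on pivots $v_i$ with $i \leq n-2$. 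The delicate checks are: (i) a move that replaces $v_{n-2} v_\infty$ in the $F_{n-2}$ listing becomes, in $F_n$, the corresponding move on $v_{n-2} v_{n-1}$, and these two moves have the same priority rank relative to all others because $v_{n-2}$'s other incident edges are unaffected; (ii) no pivot on $v_{n-1}$ or $v_n$ beats a smaller-vertex pivot in priority, so none can be executed prematurely. The call \RevGen{}$(n-2, 0, 0)$ on line~18 uses $s_1 = 0$ to skip the $\spt{n-3} + 1$ trees that would otherwise duplicate trees already produced in S3, and a short bookkeeping argument using the inductive hypothesis shows the remaining $\spt{n-2} - \spt{n-3}$ trees appear in exactly the order that the greedy demands. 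Once all four stages and three transitions are verified, determinism of the greedy together with the tree-count equality from Lemma~\ref{EdgeMovesLemma} forces $\LIST{n} = \textsc{Greedy}(P_n)$, and the mirror argument completes $\REVLIST{n} = \textsc{Greedy}(L_n)$.
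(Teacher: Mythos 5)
Your overall strategy is the same as the paper's: induct on $n$, argue stage by stage (S1--S4 plus the transition moves) that every move made by \Gen{} is the smallest valid edge move from the current tree, and then use the count from Lemma~\ref{EdgeMovesLemma} to force \textsc{Greedy}$(P_n)$ to halt once $\LIST{n}$ is exhausted. Two presentational differences: the paper takes $n=2,3,4$ as base cases because the ``$k>4$'' guard makes $n=4$ deviate from the generic step (your induction starting at $n\geq 4$ needs that case handled separately), and it does not treat the reverse direction as merely symmetric --- it separately verifies the three between-stage moves of \RevGen{} (after S4, S3, S2), since in the reversed listing the greedy-minimality of those particular moves has to be re-checked.

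The substantive problem is your structural invariant for S2. You claim that for every pivot $v_i$ with $i\leq n-1$ the valid moves in $F_n$ coincide with those in $F_{n-1}$, the only discrepancy being that removing $v_{n-1}v_n$ is invalid. That is true in S1, where $v_nv_{n-1}$ is present so no move can \emph{add} $v_{n-1}v_n$; but in S2 the edge $v_nv_{n-1}$ is absent, and at pivot $v_{n-1}$ there are extra $F_n$-moves that add it. For instance, whenever $v_{n-1}v_\infty$ lies in the current tree, the move $T_n - v_{n-1}v_\infty + v_{n-1}v_n$ yields a spanning tree containing both edges at $v_n$, which has not yet been generated (it belongs to S3/S4) and is therefore valid. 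So the move sets do not agree, and the argument must instead be a priority comparison: one has to check that at every step of S2 the move taken by \textsc{Greedy}$(L_{n-1})$ outranks these extra moves. This holds, but only because of a structural fact about $\REVLIST{n-1}$: its mid-stage moves either have pivot at most $v_{n-2}$, or are the single pivot-$v_{n-1}$ move $T - v_{n-1}v_\infty + v_{n-1}v_{n-2}$, whose added endpoint $n-2$ beats the endpoint $n$ of the competing extra move. Had $\REVLIST{n-1}$ ever made the move $T - v_{n-1}v_{n-2} + v_{n-1}v_\infty$, the strictly smaller valid move $T_n - v_{n-1}v_{n-2} + v_{n-1}v_n$ would be taken and the listing would deviate; so this information must be extracted from the shape of \RevGen{} (or carried as a strengthened induction hypothesis) rather than following from ``$v_n$ is a leaf.'' A similar, easier check is needed in S3/S4 for pivots $v_{n-1}$ and $v_n$. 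Finally, note that the S2-to-S3 transition is the pivot-$v_{n-1}$ move $T_n - v_{n-1}v_{n-2} + v_{n-1}v_n$ (line~15), not the swap at $v_n$ that your S1/S2 paragraph suggests occurs at the end of S2; the $v_n$ swap there would recreate $P_n$ and is invalid.
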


\begin{proof}
By induction on $n$.  It is straightforward to verify that the result holds for $n=2,3,4$ by iterating through the algorithms. Assume $n>4$, and that $\LIST{j} = \textsc{Greedy}(P_j)$ and $\REVLIST{j} = \textsc{Greedy}(L_j)$ for $2\le j<n$. We begin by showing $\LIST{n} = \textsc{Greedy}(P_n)$, breaking the proof into each of the four stages of a call to \Gen{}$(n-1, 1, 0)$ starting with $P_n$. \\

\noindent \underline{S1:} Since $n>4$ and $s_1=1$, \Gen{}$(n-1, 1, 0)$ is executed. By our inductive hypothesis, $\LIST{n-1} = \textsc{Greedy}(P_{n-1})$. These must be the first trees for \textsc{Greedy}$(P_n)$, as any edge move involving $v_n v_{n-1}$ or $v_n v_\infty$ is larger than any edge move made by \textsc{Greedy}$(P_{n-1})$. Since \textsc{Greedy}$(P_{n-1})$ halts, it must be that no edge move of $T_{n-1}$ is possible. So \textsc{Greedy}$(P_n)$ must make the next smallest edge move, which is $T_n - v_n v_{n-1} + v_n v_\infty$. Since $T_n$ is a spanning tree, it follows that $T_n - v_n v_{n-1} + v_n v_\infty$ is also a spanning tree (and has not been generated yet), and therefore the edge move is valid. At this point, \Gen{}$(n, 1, 0)$ also makes this edge move, by line 13.\\
\begin{sloppypar}
\noindent \underline{S2:} \RevGen{}$(n-1, 1, 0)$ ($T_{n-1} = L_{n-1}$) is then executed. By our inductive hypothesis, $\REVLIST{n} = \textsc{Greedy}(L_{n-1})$. Since \textsc{Greedy}$(L_{n-1})$ halts, it must be that no edge moves of $T_{n-1}$ are possible. At this point, $T_{n-1} = P_{n-1}$ because \RevGen{}$(n-1, 1, 0)$ was executed. The smallest edge move now remaining is $T_n - v_{n-2} v_{n-1} + v_n v_{n-1}$. This results in $T_n = P_{n-2} + v_n v_{n-1} + v_n v_\infty$, which is a spanning tree that has not been generated. So, \textsc{Greedy}$(P_n)$ must make this move. \Gen{}$(n, 1, 0)$ also makes this move, by line 15. So, $\LIST{n}$ must equal \textsc{Greedy}$(P_n)$ up to the end of S2.\\

\noindent \underline{S3:} Next, \Gen{}$(n-2, 1, 1)$ starting with $T_{n-2} = P_{n-2}$ is executed. Since $varEdge = 1$, $v_{n-2} v_{n-1}$ is added instead of $v_{n-2} v_\infty$. \textsc{Greedy}$(P_n)$ also adds $v_{n-2} v_{n-1}$ instead of $v_{n-2} v_\infty$ since $v_{n-2} v_{n-1}$ is smaller than $v_{n-2} v_\infty$ and this edge move results in a tree not yet generated. Other than the difference in this one edge move, which occurs outside the scope of $T_{n-2}$, \Gen{}$(n-2, 1, 0)$ and \Gen{}$(n-2, 1, 1)$ (both starting with $T_{n-2}=P_{n-2}$) make the same edge moves. Since we also know that $\LIST{n-2} = \textsc{Greedy}(P_{n-2})$ by the inductive hypothesis, it follows that $\LIST{n}$ continues to equal \textsc{Greedy}$(P_n)$ after line 16 of \Gen{}$(n,1,0)$ is executed. We know that $T_{n-2} = L_{n-2}$ after \Gen{}$(n-2, 1, 0)$. However, $T_{n-2} = L_{n-2} - v_{n-2} v_\infty + v_{n-2} v_{n-1}$ instead because \Gen{}$(n-2, 1, 1)$ was executed ($varEdge=1$). It must be that no edge moves of $T_{n-2}$ are possible because \textsc{Greedy}$(P_{n-2})$ (and \Gen{}$(n-2, 1, 1)$) halted. The smallest edge move now remaining is $T_n - v_{n-2} v_{n-1} + v_{n-2} v_\infty$. This results in $T_{n-2} = L_{n-2}$. Also, $T_n = T_{n-2} + v_n v_{n-1} + v_n v_\infty$ is a spanning tree since $T_{n-2}$ is a spanning tree of $F_{n-2}$. So \textsc{Greedy}$(P_n)$ makes this move. \Gen{}$(n, 1, 0)$ also makes this move, by line 17, and thus $\LIST{n} = \textsc{Greedy}(P_n)$ up to the end of S3. \\
\end{sloppypar}
\noindent \underline{S4:} Finally, \RevGen{}$(n-2, 0, 0)$  starting with $T_{n-2} = L_{n-2}$ is executed. By our inductive hypothesis, $\REVLIST{n-2} = \textsc{Greedy}(L_{n-2})$. From lines 15-18 of Algorithm~\ref{alg: RevGen}, it is clear that \RevGen{}$(n-2, 0, 0)$ and \RevGen{}$(n-2, 1, 0)$ make the same edge moves until \RevGen{}$(n-2, 0, 0)$ finishes executing. So, by the inductive hypothesis, the listings produced by \RevGen{}$(n-2, 0, 0)$ and \textsc{Greedy}$(L_{n-2})$ are the same until this point, which is where \Gen{}$(n, 1, 0)$ finishes execution. By Lemma~\ref{EdgeMovesLemma} we have that $|\LIST{n}| = \spt{n}$. Therefore, \textsc{Greedy}$(P_n)$ has also produced this many trees, and each tree is unique. Thus, it must be that all $\spt{n}$ trees of $F_n$ have been generated. Thus, \textsc{Greedy}$(P_n)$ also halts.

Since $\LIST{n}$ and \textsc{Greedy}$(P_n)$ start with the same tree, produce the same trees in the same order, and halt at the same place, it follows that $\LIST{n} = \textsc{Greedy}(P_n)$. We now prove that $\REVLIST{n} = \textsc{Greedy}(L_n)$. By using our inductive hypothesis and the same arguments as previously, we see that this results hold within the four stages of \RevGen{}$(n, 1, 0)$ when starting with $L_n$(namely, \Gen{}$(T_{n-2}, 0, 0)$, \RevGen{}$(T_{n-2}, 1, 1)$, \Gen{}$(T_{n-1}, 1, 0)$, and \RevGen{}$(T_{n-1}, 1, 0)$). 
However, we must still prove that $\REVLIST{n}$ matches $\textsc{Greedy}(L_n)$ as we move between stages.\\

\noindent \underline{After S4:} After \Gen{}$(n-2, 0, 0)$, $T_n = L_{n-2} + v_n v_{n-1} + v_n v_\infty$. Since $v_{n-2} v_\infty \in T_n$ and no edge moves of $T_{n-2}$ are possible (because \textsc{Greedy}$(L_{n-2})$ halted), \textsc{Greedy} makes the next smallest edge move which is $T_n - v_{n-2} v_\infty + v_{n-2} v_{n-1}$. \RevGen{}$(n, 1, 0)$ also makes this move here, by line 11.\\

\noindent \underline{After S3:} After \RevGen{}$(n-2, 1, 1)$, $T_n = P_{n-2} + v_n v_{n-1} + v_n v_\infty$. No edge moves of $T_{n-2}$ are possible at this point. Therefore, since $v_n v_{n-1} \in T_n$ and $v_{n-2} v_{n-1} \not \in T_n$, \textsc{Greedy} makes the smallest possible edge move which is $T_n - v_n v_{n-1} + v_{n-2} v_{n-1}$. \RevGen{}$(n, 1, 0)$ also makes this move here, by line 13. \\

\noindent \underline{After S2:} Finally, after \Gen{}$(n-1, 1, 0)$, $T_n = L_{n-1} + v_n v_\infty$.  No edge moves of $T_{n-1}$ are possible at this point. Therefore, \textsc{Greedy} must make the only remaining edge move, which is $T_n - v_n v_\infty + v_n v_{n-1}$. \RevGen{}$(n, 1, 0)$ also makes this move here, by line 17.

Since \textsc{Greedy}$(L_n)$ and \RevGen{}$(n, 1, 0)$ start with the same tree, produce the same trees in the same order, and halt at the same place, it follows that $\REVLIST{n} = \textsc{Greedy}(L_n)$.
\end{proof}
%
%

Because \textsc{Greedy}$(P_n)$ generates unique spanning trees of $F_n$, Lemma~\ref{EdgeMovesLemma} together with Lemma~\ref{Gen=GreedyTheorem} implies the following. 
%
\begin{lemma}
For $n \geq 2$,  $\LIST{n}$ = \textsc{Greedy}$(P_n)$ is a pivot Gray code listing for the spanning trees of $F_n$.
\end{lemma}

It remains to prove how efficiently our pivot Gray codes can be generated. To store the global tree $T$, the algorithms \Gen{} and \RevGen{} can employ an adjacency list model where each edge $uv$ is associated only with the smallest labeled vertex $u$ or $v$.  This means $v_\infty$ will never have any edges associated with it, and every other vertex will have at most 3 edges in its list.  Thus the tree $T$ requires at most $O(n)$ space to store, and edge additions and deletions can be done in constant time.  
The next result completes the proof of Theorem~\ref{thm:main}.

\begin{lemma}
For $n\geq 2$, $\LIST{n}$ and $\REVLIST{n}$ can be generated in $O(1)$-amortized time using $O(n)$ space.
\end{lemma}

\begin{proof}
For each call to \Gen{}$(n, s_1, varEdge)$ where $n>3$, there are at most four recursive function calls, and at least two new spanning trees generated.  Thus, the total number of recursive calls made is at most twice the number of spanning trees generated.  Each edge addition and deletion can be done in constant time as noted earlier. Thus each recursive call requires a constant amount of work, and hence the overall algorithm will run in $O(1)$-amortized time.  There is a constant amount of memory used at each recursive call and the recursive stack goes at most $n-3$ levels deep; this requires $O(n)$ space.  As mentioned earlier, the global variable $T$ stored as adjacency lists also requires $O(n)$ space.
\end{proof}

\subsection{Proof of Lemma~\ref{lem:last}}

\begin{sloppypar}
We prove that $L_n = Last_n$ for $n \geq 2$ by induction on $n$ and tracing the routines {\sc Gen} and {\sc RevGen}.  By definition of $Last_n$, the result holds for $n=2,3,4$. Assume that $L_j = Last_j$ for $2 \leq j < n$.  Recall that $L_n$ is the last tree generated by a call to \Gen{}$(n,1,0)$ when starting with the tree $P_n$.  Tracing this routine  following the proof of  Lemma~\ref{Gen=GreedyTheorem}, 
the current spanning tree when 
calling \RevGen{}$(n-2, 0, 0)$ (on line 18) is $L_{n-2} + v_n v_{n-1} + v_n v_\infty$.  Thus from the definition of $Last_n$, we must show that the last tree of \RevGen{}$(n-2, 0, 0)$ when starting with $L_{n-2}$ is $Last_{n-3} + v_{n-2}v_\infty$.
%

Since \RevGen{}$(k, 1, 0)$ starting with $L_k$ is the reversal of \Gen{}$(k, 1, 0)$ starting with $P_k$, then the last tree of S2 of \RevGen{}$(k, 1, 0)$ must be the first tree of S2 of \Gen{}$(k, 1, 0)$.  The last tree of the recursive call S1 of \Gen{}$(k, 1, 0)$ when starting with $P_k$ is $Last_{k-1} + v_k v_{k-1}$ because, by the inductive hypothesis, $L_{k-1} = Last_{k-1}$, and $v_k v_{k-1} \in P_k$. Then, the edge move  made by line 13 of \Gen{}$(k, 1, 0)$  removes $v_k v_{k-1}$ and adds $v_k v_\infty$. It follows that the first tree of S2 of \Gen{}$(k, 1, 0)$, and equivalently the last tree of S2 of \RevGen{}$(k, 1, 0)$, which is also the last tree of \RevGen{}$(k, 0, 0)$ when starting with $L_k$, is $Last_{k-1} + v_k v_\infty$.  Thus, the last tree of \RevGen{}$(n-2, 0, 0)$ starting with $L_{n-2}$ is $Last_{n-3} + v_{n-2} v_\infty$, as desired.

\end{sloppypar}



\section{Conclusion} \label{sec:summary}

We answer each of the three Research Questions outlined in Section~\ref{sec:intro} in the affirmative for the
fan graph, $F_n$.  First, we discovered a greedy algorithm that exhaustively listed all spanning trees of $F_n$ experimentally for small $n$ with an easy to define starting tree.  
We then studied this listings which led to a recursive construction producing the same listing that runs in $O(1)$-amortized time using $O(n)$ space.  We also proved that the greedy algorithm does in fact exhaustively list all spanning trees of $F_n$ for all $n\geq 2$, by demonstrating the listing is equivalent to the aforementioned recursive algorithm.  It is the first greedy algorithm known to exhaustively list all spanning trees for a non-trivial class of graphs. Finally, we provided $O(n)$-time ranking and unranking algorithms for our listings, assuming the unit cost RAM model. It remains an interesting open problem to answer the research questions for other classes of graphs including the wheel, $n$-cube, and complete graph.

\subsection{Final comment: The wheel}
The wheel $W_n$ is obtained by adding the single edge $v_2v_n$ to $F_n$. With the addition of this single edge, we were unable to find a greedy algorithm to list all the spanning trees of $W_n$ in a pivot Gray code order.  However, we were able to adapt the recursive algorithm for the spanning trees of $F_n$ to obtain a pivot Gray code for $W_n$ by appropriately inserting the spanning trees of $W_n$ that contain the \emph{wheel edge} $v_2v_n$.

\begin{figure}
\begin{center}
  \includegraphics[scale = 0.22]{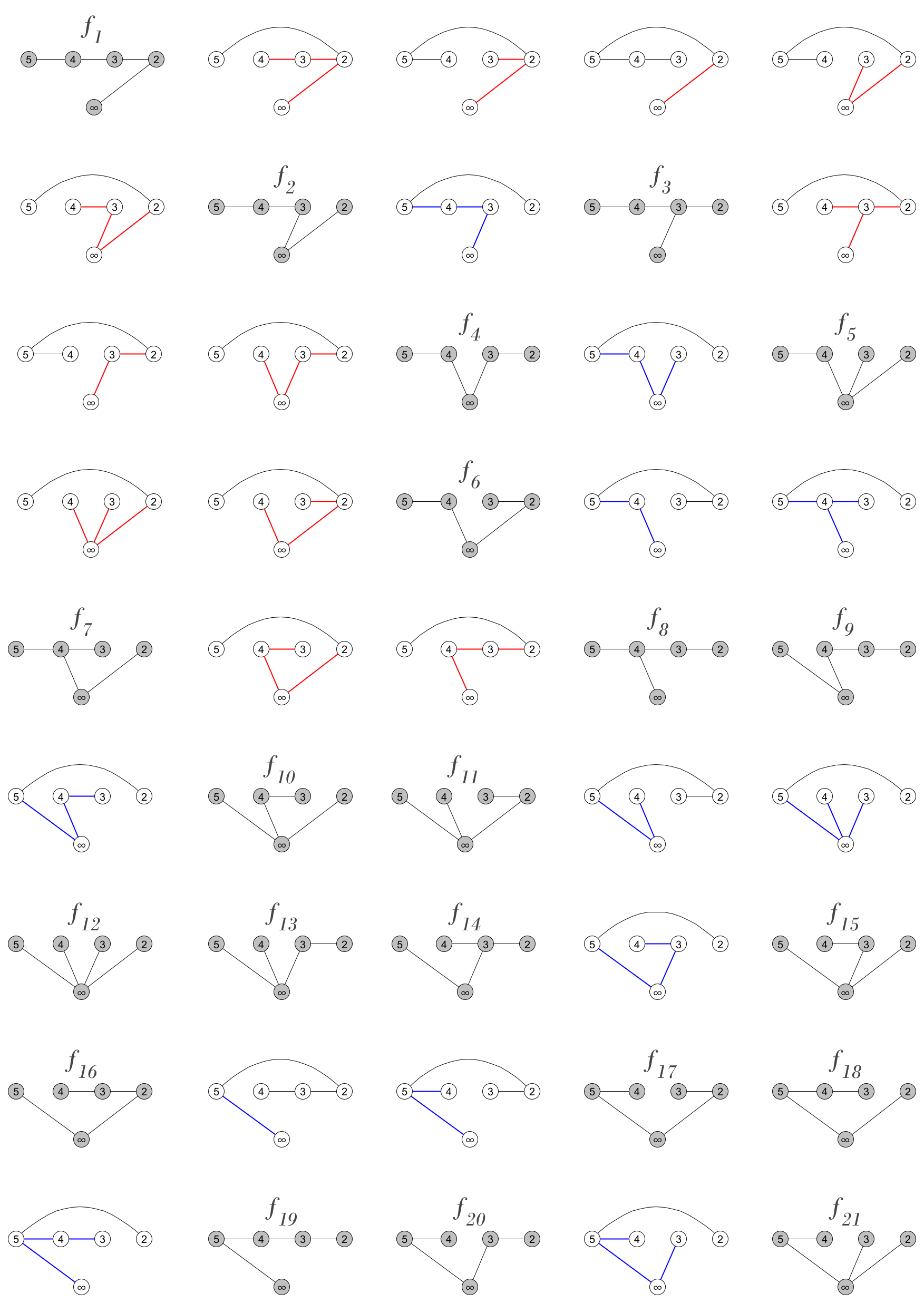}
  \caption{A pivot Gray code listing for the $W_5$ obtained by inserting the spanning trees with the edge $v_2v_5$ in between the trees of $\LIST{5}$ (colored with grey vertices). The label $f_j$ denotes the $j$th tree of $\LIST{5}$. Blue and red are used to highlight the edges of a left or right subgraph, respectively.}
  \label{fig:W5PivotExample}
\end{center}
\end{figure}

Figure~\ref{fig:W5PivotExample} provides an example of a pivot Gray code listing for the spanning trees of $W_5$ obtained by inserting the trees containing the edge $v_5 v_2$ into the listing for $\LIST{5}$.   
Note that all the trees containing the wheel edge $v_5 v_2$ contain subgraphs corresponding to the spanning trees of $F_4$, $F_3$, or $F_2$. For example, the second tree in the first row contains the first spanning tree of $\LIST{4}$ as a subgraph (highlighted in red) on the vertices $v_4, v_3, v_2,$ and $v_\infty$. The third tree of the second row also contains the first tree of $\LIST{4}$, except this time it appears as a subgraph (highlighted in blue) on the vertices $v_5, v_4, v_3,$ and $v_\infty$. We now introduce some terminology to differentiate between these two cases. A tree, $T$, with the wheel edge contains a \textit{right subgraph} if $v_2$ is connected to $v_\infty$ when the edge $v_n v_2$ is removed from $T$. Similarly, $T$ contains a \textit{left subgraph} if $v_n$ is connected to $v_\infty$ when the edge $v_n v_2$ is removed from $T$. As a further example of trees that contain right subgraphs, see the third and fourth tree on the first row, which contain the first tree of $\LIST{3}$ and the first tree of $\LIST{2}$, respectively.


When inserting the trees with the wheel edge, it is  straightforward to fit in the trees that contain right subgraphs due to the recursive nature of \Gen{}. Since the trees of S1 of \Gen{}$(n, 1, 0)$ are all of the trees of \Gen{}$(n-1, 1, 0)$ with the edge $v_n v_{n-1}$ added, we can add trees containing the wheel edge and these right subgraphs as intermediate trees in between the trees of S1 of $\LIST{n}$ by removing the edge $v_n v_{n-1}$ and adding $v_n v_2$. We can further insert the trees containing smaller subgraphs (like the third and fourth tree on the first row) by rotating edges along the path, as seen in between $f_1$ and $f_2$ of Figure~\ref{fig:W5PivotExample}. Note that in between the fourth and fifth tree of the first row, we make the original edge move between $f_1$ and $f_2$, and then rotate edges back along the path to obtain $f_2$. 

Unfortunately, the trees containing left subgraphs do not follow a nice recursive pattern. However, we are still able to insert them appropriately by considering two cases. First, when $v_2$ is the pivot vertex, we can insert a single tree containing a left subgraph as an intermediate edge move. As an example of this, see the tree in between $f_2$ and $f_3$ of Figure~\ref{fig:W5PivotExample}. The other case is when $v_2 v_\infty$ is present and $v_2$ is not the pivot vertex, in which case we can insert two trees. For example, in between the trees labeled $f_6$ and $f_7$ in Figure~\ref{fig:W5PivotExample}. Note that we require $v_2 v_\infty$ to be present so that we can remove it and add $v_n v_2$ as an intermediate move. If we instead replaced either $v_2 v_\infty$ or $v_2 v_3$ with $v_n v_2$, then we could end up with duplicate trees. Also note that generating right subgraphs takes precedence over generating left subgraphs. For example, even though our first case is satisfied in between $f_7$ and $f_8$, there are still trees with right subgraphs that can be generated, so we do not generate a tree with a left subgraph. Finally, note that inserting trees in the ways we have described does not change the relative order that the trees of $\LIST{n}$ appear.  


\footnotesize
\bibliographystyle{abbrv}
\bibliography{References}
\normalsize

\newpage
\appendix
\section{C Code}
\label{app:code}

\scriptsize
\begin{code}
#include <stdio.h>
#include <stdlib.h>
#define MAX_N 30

int n;
int tree[MAX_N+2][MAX_N+2];     // Adjacency matrix of a spanning tree
long long int fib[2*MAX_N+1];   // Stores the Fibonacci numbers
long long int numTrees = 1;     // Number of trees generated

void ReverseGen(int k, int S1, int varEdge);

//-------------------------------------------------
void PrintMove(int v, int old, int new) {
    printf("Move #
}
//-------------------------------------------------
void PrintTree() { // Prints edge list of tree

  for (int i = 2; i < n+1; i++) {
    for (int j = i+1; j < n+2; j++) {
      if (tree[i][j] == 1) {
        printf("
      }
    }
  } printf("\n");

}
//-------------------------------------------------
int tF(int k) { return fib[2*k - 2]; } // Calculates t(F_k)

//-------------------------------------------------
void CreateStartTree() { // Creates adjacency matrix of P_n
    tree[2][n+1] = tree[n+1][2] = 1;
    for (int i = 3; i < n+1; i++) {
        tree[i][i-1] = tree[i-1][i] = 1;
    }
}
//-------------------------------------------------
void CreateLastTree(int k) { // Creates adjacency matrix of L_n
    if (k == 2) {
        tree[2][n+1] = tree[n+1][2] = 1;
    } else if (k == 3) {
        tree[2][3] = tree[3][2] = 1;
        tree[3][n+1] = tree[n+1][3] = 1;
    } else if (k == 4) {
        tree[2][3] = tree[3][2] = 1;
        tree[3][4] = tree[4][3] = 1;
        tree[4][n+1] = tree[n+1][4] = 1;
    } else if (k > 4){
        tree[k][k-1] = tree[k-1][k] = 1;
        tree[k][n+1] = tree[n+1][k] = 1;
        tree[k-2][n+1] = tree[n+1][k-2] = 1;
        CreateLastTree(k-3);
    }
}
//-------------------------------------------------
void CreateFib() { // Populates the Fibonacci array
    fib[1] = fib[2] = 1;
    for (int i = 3; i <= 2*(MAX_N-1); i++) fib[i] = fib[i-1] + fib[i-2];
}
//-------------------------------------------------
int Rank(int k) {
    
    // Base cases
    if (k == 3) { // F_3
        if (tree[k][k-1] == 1 && tree[k][n+1] == 1) return 3;
        else if (tree[k][n+1] == 1) return 2;
        else if (tree[k][k-1] == 1) return 1;
    } 
    else if (k == 2) { // F_2
        if (tree[k][n+1] == 1) return 1;
    }
    
    if (tree[k][k-1] == 1 && tree[k][n+1] == 1) { // Both edges present
        
        if (tree[k-2][n+1] == 1) return 2*tF(k-1) + 2*tF(k-2) - Rank(k-2) + 1*(k!=4); //  S4
        else if (tree[k-2][k-1] == 1) {
            
            // Delete e_4, Add e_3 and continue as normal
            tree[k-2][k-1] = tree[k-1][k-2] = 0;
            tree[k-2][n+1] = tree[n+1][k-2] = 1;
        }      
        return 2*tF(k-1) + Rank(k-2) + 1*(k==4); // S3
    } 
    else if (tree[k][k-1] == 1) return Rank(k-1); // S1
    else if (tree[k][n+1] == 1) return 2*tF(k-1) - Rank(k-1) + 1; // S2
    return 0; 
}
//-------------------------------------------------
void Unrank(int k, int rank, int replaceEdge) {
    
    // Base cases
    if (k == 2) { // F_2
        if (replaceEdge == 1) tree[3][2] = tree[2][3] = 1;
        else tree[n+1][2] = tree[2][n+1] = 1;
        return;
    }
    else if (k == 3) { // F_3
        if (rank == 1) {
            tree[n+1][2] = tree[2][n+1] = 1;   tree[2][3] = tree[3][2] = 1;
        }  
        else if (rank == 2) {
            tree[2][n+1] = tree[n+1][2] = 1;
            if (replaceEdge == 1) tree[3][4] = tree[4][3] = 1;
            else tree[n+1][3] = tree[3][n+1] = 1;
        } 
        else if (rank == 3) {
            tree[3][2] = tree[2][3] = 1;
            if (replaceEdge == 1) tree[3][4] = tree[4][3] = 1;
            else tree[3][n+1] = tree[n+1][3] = 1;
        }
        return;
    }   
    if (rank <= tF(k-1)) { // S1 - Add e_1  
        tree[k][k-1] = tree[k-1][k] = 1;       
        Unrank(k-1, rank, 0);
    }
    else if (rank <= 2*tF(k-1)) { // S2 - Add e_2
        if (replaceEdge == 1) tree[k][k+1] = tree[k+1][k] = 1;
        else tree[k][n+1] = tree[n+1][k] = 1;        
        Unrank(k-1, 2*tF(k-1) - rank + 1, 0);
    }
    else if (rank <= 2*tF(k-1) + tF(k-2)) { // S3 - Add both edges 
        tree[k][k-1] = tree[k-1][k] = 1;
        if (replaceEdge == 1) tree[k][k+1] = tree[k+1][k] = 1;
        else tree[k][n+1] = tree[n+1][k] = 1;      
        Unrank(k-2, rank - 2*tF(k-1), 1*(k!=4));
    }
    else if (rank <= 3*tF(k-1) - tF(k-2)) { // S4 - Add both edges        
        tree[k][k-1] = tree[k-1][k] = 1;
        if (replaceEdge == 1) tree[k][k+1] = tree[k+1][k] = 1;
        else tree[k][n+1] = tree[n+1][k] = 1;      
        Unrank(k-2, 2*tF(k-1) + 2*tF(k-2) - rank + 1, 1*(k==4));
    }
}
//-------------------------------------------------
void Replace(int v, int old, int new) { // Delete (v, old), Add (v, new)
    PrintMove(v, old, new);
    tree[v][old] = tree[old][v] = 0;
    tree[v][new] = tree[new][v] = 1; numTrees++;
}
//-------------------------------------------------
void Gen(int k, int S1, int varEdge) {
    
    if (k == 2) { // F_2 base case
        if (varEdge == 1) Replace(2, n+1, 3);
    } 
    else if (k == 3) { // F_3 base case
        if (S1 == 1) {
            if (varEdge == 0) Replace(3, 2, n+1);
            else Replace(3, 2, 4); // S3
        }
        Replace(2, n+1, 3);
    } 
    else {
        if (S1 == 1) {
            Gen(k-1, 1, 0);
            if (varEdge == 0) Replace(k, k-1, n+1);
            else Replace(k, k-1, k+1); // S3
        }
        ReverseGen(k-1, 1, 0);
        Replace(k-1, k-2, k);
        Gen(k-2, 1, 1);
        if (k > 4) Replace(k-2, k-1, n+1);
        ReverseGen(k-2, 0, 0);
    } 
}
//-------------------------------------------------
void ReverseGen(int k, int S1, int varEdge) {
    
    if (k == 2) {
        if (varEdge == 1) Replace(2, 3, n+1);
    } 
    else if (k == 3) {
        Replace(2, 3, n+1);
        if (S1 == 1) {
            if (varEdge == 0) Replace(3, n+1, 2);
            else Replace(3, 4, 2);
        }
    } 
    else {
        Gen(k-2, 0, 0);
        if (k > 4) Replace(k-2, n+1, k-1);
        ReverseGen(k-2, 1, 1);
        Replace(k-1, k, k-2);
        Gen(k-1, 1, 0);
        if (S1 == 1) {
            if (varEdge == 0) Replace(k, n+1, k-1);
            else Replace(k, k+1, k-1);
            ReverseGen(k-1, 1, 0);
        }
    } 
}
//-------------------------------------------------
int main() {  
  int choice, rank, v1, v2;

  // User input and error checking
  printf(" ###################################################################################");
  printf("####################################################################################\n\n");
  printf(" This program provides functionality to list the spanning trees of the Fan graph");
  printf(" in a pivot Gray code order, rank a tree in the listing, or unrank a tree in the listing.\n");
  printf(" The vertices on the path are labeled 1 to n-1, and the universal vertex is labeled n.\n\n");
  printf(" ###################################################################################");
  printf("####################################################################################\n\n");

  printf("  1. Pivot Gray code generation (GEN)\n");
  printf("  2. Pivot Gray code generation in reverse order of option 1 (REVGEN)\n");
  printf("  3. Rank a tree in the listing generated by option 1\n");
  printf("  4. Unrank a tree in the listing generated by option 1\n");
  printf("  Enter selection: ");  scanf("

  if (choice < 1 || choice > 4) {
    printf("Error: Invalid choice.\n");
    exit(0);
  }
  printf("Input n: "); scanf("
  if (n > MAX_N) {
    printf("Error: n is too big. Please try n <= 30.\n");
  }
  CreateFib();

  if (choice == 1) { // GEN

    CreateStartTree();
    printf("\n##### GEN ####\n");
    Gen(n, 1, 0);
    printf("Number of spanning trees of F_

  } else if (choice == 2) {

    CreateLastTree(n);
    printf("\n#### REVGEN ####\n");
    ReverseGen(n, 1, 0);
    printf("Number of spanning trees of F_

  } else if (choice == 3) { // RANK

    printf("Enter the edges of the spanning tree in format 'v1 v2'. ");
    printf("If you input edge (v1, v2), do not input edge (v2, v1). ");
    printf("Use labels 1 to n-1 for the vertices on the path (from right to left)");
    printf(", and label n for the universal vertex. ");
    printf("Warning: no error checking is done.\n");
    for (int i = 1; i <= n-1; i++) {
      printf("Edge 
      scanf("
      tree[v1+1][v2+1] = tree[v2+1][v1+1] = 1;
    }
    printf("Rank of inputted tree in listing for GEN is #

  } else if (choice == 4){ // UNRANK

    printf("Enter rank (between 1 and 
    scanf("
    if (rank < 1 || rank > tF(n)) {
      printf("Error: Invalid input.\n"); exit(0);
    }
    Unrank(n, rank, 0);
    printf("\nTree #
    PrintTree();
  }

  return 0;
}
\end{code}

\end{document}